\newtheorem{theorem}{Theorem}[section]
\newtheorem{definition}{Definition}[section]
\newtheorem{claim}{Claim}[section]
\newtheorem{corollary}{Corollary}[section]
\newtheorem{lemma}{Lemma}[section]
\newtheorem{example}{Example}[section]
\title{Beyond Proportional Individual Guarantees for Binary Perpetual Voting}
\author{
Yotam Gafni\textsuperscript{\rm 1}, Ben Golan\textsuperscript{\rm 2}
}
\begin{document}

\maketitle

\begin{abstract}
   Perpetual voting studies fair collective decision-making in settings where many decisions are to be made, and is a natural framework for settings such as parliaments and the running of blockchain Decentralized Autonomous Organizations (DAOs). We focus our attention on the binary case (YES/NO decisions) and \textit{individual} guarantees for each of the participating agents. 
We introduce a novel notion, inspired by the popular maxi-min-share (MMS) for fair allocation. The agent expects to get as many decisions as if they were to optimally partition the decisions among the agents, with an adversary deciding which of the agents decides on what bundle. We show an online algorithm that guarantees the MMS notion for $n=3$ agents, an offline algorithm for $n=4$ agents, and show that no online algorithm can guarantee the $MMS^{adapt}$ for $n\geq 7$ agents. We also show that the Maximum Nash Welfare (MNW) outcome can only guarantee $O(\frac{1}{n})$ of the MMS notion in the worst case.
\end{abstract}

%

\section{Introduction}

A main focus of social choice theory is fair voting. Mostly, fair voting focuses on how to accurately represent the public's preferences when choosing a single candidate or a committee (see, e.g., Part I of the Handbook of Computational Social Choice \cite{handbookCSC}). However, once this voting body is established, different representatives or parties may have varying opinions on a large set of issues. It is generally accepted that the majority (or, in some cases, a qualified majority) vote within the voting body should decide the fate of each decision. Seemingly innocuous, this approach may lead to the following conundrum: Imagine that within a voting body of $100$ representatives, virtually all decisions are supported by the same $51$ of the representatives, while $49$ object. In this case, by the end of the session of Parliament, the thousands of decisions will all go in one way, even though the majority's advantage is minuscule. Compare this to if we implemented a random dictator process. Then, we could expect that a little more than half of the decisions would go one way, and little less than half the other way. Our main question in this work is: Could we devise appropriate fairness notions for such decision-making situations? And can we find deterministic, algorithmic ways of implementing them? 

This problem is especially timely due to the growing interest in advanced governance mechanisms for Blockchain Decentralized Autonomous Organizations (DAOs) \cite{fabrega2023DAO}. DAOs are smart contracts with assets under management, whose use is decided by voting tokens issued to stakeholders. 
As of the time of writing, DAOS are responsible for managing over 20 Billion USD in assets \cite{DeepDAO}. Typically, DAOs have a few dozens/hundreds of voters, voting (over time) over a few thousands of proposals (see Table~1 of \cite{grennan2023DAO}). Respecting minority opinion is crucial for DAOs, as it could otherwise result in ``rage quitting'' \cite{fabrega2023DAO}, as already happened, e.g., in NounsDao. Rage quitting has many adverse affects, one of which is the consolidation of power by the majority token holders. Other solutions include veto rights for the minority: But this has other disadvantages, as decisions can be stalled indefinitely. It is thus recognized that better governance mechanisms are needed. 

\subsection{Our Approach}

We wish to focus on the most simple additive model. As we see, this still leads to interesting and non-trivial results and techniques. In particular, we impose three important restrictions over the possible decisions: 

\begin{enumerate}
\item Each decision is binary, i.e., has two possible choices, 

\item Each decision implies polarized preferences. I.e., if the agent's preference is satisfied, it receives a utility of $1$, and $0$ otherwise, 

\item Each decision has equal additive weight. 
\end{enumerate} 

We then consider our main notion, inspired by maxi-min-share (MMS) for indivisible goods:

The agent considers an adversary that assigns different sets of decisions to different agents. The agent chooses a partition of the decisions to $n$ different bundles, and the adversary chooses which agent decides which bundle. 

\begin{example}[Informal]
\label{ex:mms_motivation}
If all agents \textit{agree} over all $m$ decisions, an agent can expect all decisions to go her way. Indeed, under any partition and any agent assignment by the adversary, all decisions will follow the consensus and will go according to the agent's preference.

If the agent is \textit{opposed} to all other $n-1$ agents on all decisions, the agent can expect $\lfloor \frac{m}{n} \rfloor$ of the decisions to go her way. This is achieved by partitioning the decisions roughly equally, whereas the worst adversary choice is to let the agent decide a set of size rounded down to $\lfloor \frac{m}{n} \rfloor$. 
\end{example}

As we see in the last example, our MMS notion guarantees the agent at least PROP1 (their proportional share of the decisions, up to one decision), which is why we say it goes \textit{beyond the proportional} individual guarantee.

\subsection{Our Results}
In Section~\ref{sec:prelim}, we define our model and share notion. In Section~\ref{sec:existence}, we develop existence results and algorithms. 
We focus on our main class of algorithms, which we call \textit{graceful sequences}. In a graceful sequence algorithm, we define for each decision \textit{type} (i.e., a combined preference profile of all agents), a cyclical sequence of rules on how to decide it. For example, with $n=3$, we consider a sequence where, if two agents support a motion and the other agent opposes it, then the motion passes the first two times, but is rejected the third time, and so on. For $n=3$, this (online) algorithm indeed achieves MMS. For $n=4$, no online algorithm achieves MMS, but an \textit{offline} algorithm that uses it as a primitive achieves MMS. For $n\geq 7$, we show that no online algorithm can achieve the MMS. 

In Section~\ref{sec:mnw}, we study the comparison between our MMS notion and the individual guarantees of the Max Nash welfare (MNW) outcome. In many important examples the two notions yield the same outcome, for example, in Example~\ref{ex:mms_motivation}. However, we give examples where their behavior differs, and 
prove that the Max Nash welfare allocation may only guarantee $O(\frac{1}{n})$ approximation.

\subsection{Related Work}

\textit{Fair Public Decision-Making.} \cite{fairPublicDM} present a general model for fair decision-making that contains our model as a special case, as it considers a general number of options, and general positive rewards. Within this more specialized setting, their share notion hierarchy (Round-Robin Share $\implies$ Pessimistic Proportional Share $\implies$ PROP1) collapses to PROP1. Notably, while they do not show results for it, they define a notion of MMS similar to ours, with one important difference: While they assume the agents get utility $0$ for any decision not assigned to them by the adversary, we give the adversary a more limited power: The adversary may assign the other decision bundles to the other agents, but the agents decide over them. This way, an agent can expect more decisions to go their way if they're in agreement with the other agents. In their results, MNW guarantees PROP1, but may only guarantee $\frac{1}{n}$ of the Pessimistic Proportional Share. Since they are the same in our settings, MNW guarantees both, but still only guarantees $O(\frac{1}{n})$ of our MMS notion. 


\textit{Perpetual Voting.} \cite{lackner2020PV,LacknerMaly2023,leximinPV} suggest another model of which we are a special case. In this model, with each decision, voters approve one of $n$ options, and get binary utilities depending on whether one of their approved options was accepted for that decision. We only allow $n=2$ options, and do not allow voters to approve both options. In terms of the type of questions asked, \cite{lackner2020PV} focuses on an \textit{axiomatic} analysis: They define some natural voting rules, and check whether they respect different axioms, such as always deciding with the consensus if such exists, or deciding proportionally if the same decision-type is repeated $n$ times. They also introduce a share-notion they call ``Perpetual Lower Quota'', but show that no approximation of it can be guaranteed. As we show, this is a \textit{strictly} stronger notion of our MMS notion. \cite{FritschPV,SkowronGorecki2022} focus on the binary setting of perpetual voting, as we do.  

\textit{Group Proportionality in Perpetual Voting.} \cite{Bulteau2021JR,Peters2024EJR,FKP,masarik2023generalisedEJR,SkowronGorecki2022} study notions of proportionality for \textit{groups}, in the spirit of \textit{Justified Representation} \cite{aziz2017EJR}, and appropriate algorithms to guarantee them, such as the \textit{Method of Equal Shares}. We demonstrate the difference of our approach through an example.

\begin{example}
\label{ex:jr_vs_mms}
Consider $n=3$ agents and $m=9$ decisions:
$$\begin{bmatrix} 1 & 1 & 0 & 1 & 1 & 0 & 1 & 1 & 0 \\
1 & 1 & 1 & 1 & 1 & 1 & 1 & 1 & 1 \\ 0 & 0 & 1 & 0 & 0 & 1 & 0 & 0 & 1\end{bmatrix}.$$
(i.e., agents $1$ and $2$ support the first decision, while agent $3$ objects to it, and so on).

By Justified Representation, agent $1$ and $2$ are $\frac{2}{3}$ of the agents, and they agree on $\frac{2}{3}$ of the decisions, and thus should be regarded as a cohesive block. Thus, at least one of them (if we consider EJR - Extended Justified Representation \cite{aziz2017EJR}) should have $6$ decisions go their way, while the other does not have an individual guarantee. Agent $3$ does not form a bloc with the other agents, and so is guaranteed $\frac{1}{3}$ of the decisions. 

With our MMS notion, the picture is different. Agent $3$ is guaranteed $4$ decisions to go their way, by bundling decisions $1-3, 4-6,$ and $7-9$ together. Similarly, $MMS_1 = 5, MMS_2 = 6$. Thus, agent $3$ gains (in comparison to EJR) since it does agree with agent $2$ on some decisions, even though they do not form a bloc. Agent $2$ gets the EJR guarantee of $6$, and agent $1$ gets only $5$ since they are in the minority of some decisions (but this is still a better \textit{individual} guarantee than what EJR promises). 
\end{example}

\section{Preliminaries}
\label{sec:prelim}

Missing proofs appear in the appendix.

We wish to focus on the most simple additive model. In particular, we impose three important restrictions over the possible decisions: 

\begin{enumerate}
\item Each decision is binary, i.e., has two possible choices, 

\item Each decision implies a polarized preferences, i.e., if the agent's preference is satisfied, it receives a utility of $1$, and $0$ otherwise, 

\item Each decision has an equal additive weight. 

\end{enumerate} 

Formally:
\begin{definition}
We consider $[m] = \{1,\ldots, m\}$ decisions and $[n]$ agents. 

For any decision $j\in [m]$ there are two choices, $0$ and $1$.

The agents preference matrix $M \in \{0,1\}^{n \times m}$ is a $0-1$ matrix with $n$ rows and $m$ columns, so that row $M_i \in \{0,1\}^m$ is agent $i$'s preference vector over all decisions. 

An outcome $A \in \{0,1\}^m$ of a decision-making problem is a binary $m$-tuple that determines each of the $m$ decisions.

The Hamming distance between two outcomes $A,B$ restricted to the indices $X$ is: $d_H(A,B ; X) = \sum_{k\in X} 1[A[k] = B[k]]$. If we do not specify $X$, then we assume $X = [m]$. 

The utility of agent $i$ from outcome $A$ is $$u_i(A) = d_H(M_i, A),$$ the $\#$ of decisions where the agent agrees with the outcome. 
\end{definition}

\subsection{An MMS notion for Decision-Making}


\begin{definition}
($MMS^{adapt}$)

Let $X_1, \ldots, X_n$ be a partition of $[m]$ (so that $\cup_{i=1}^n X_i = [m]$ and for any $i, j$, $X_i \cap X_j = \emptyset$ ). Let $S_n$ be all the permutations over $n$ bundles. 

$$
 MMS^{adapt}_i = \max_{\{X_1, \ldots, X_n\}} \min_{\sigma \in S_n} \sum_{j=1}^n d_H(M_i, M_{\sigma(j)} ; X_j).
$$

In words, adaptive MMS is the value of the game between agent $i$ that chooses a partition of the decisions, and an adversary that chooses the worst assignment of which agent gets to decide on which decision bundle. For each agent that $\sigma(j)$ that is assigned to a bundle $X_j$, agent $i$ ``gets'' the decisions in $X_j$ that they agree with $\sigma(j)$ about. 

We say that an outcome A is $\alpha$-MMS$^{adapt}$, if for any agent $i$, $u_i(A) \geq \alpha \cdot MMS^{adapt}_i$.
If an outcome is $\alpha$-MMS$^{adapt}$ with $\alpha = 1$, we simply say that it is MMS$^{adapt}$. 

\end{definition}


We compare the MMS to a share notion that we call the ``Random Dictator Share'':

\begin{definition}
\textit{Random Dictator Share} \[
\begin{split}
& RDS_i = E_{i'\sim UNI([n])}[u_i(M_{i'})] = \\
& \sum_{j=1}^m E_{i'\sim UNI([n])} 1[M_{i'}[j] = M_i[j]] = \\
& \sum_{j=1}^m \frac{1}{n}\sum_{i'=1}^n 1[M_{i'}[j] = M_i[j]] ,
\end{split}
\]

i.e., it is the expected utility of agent $i$ if we let a dictator chosen uniformly at random decide everything.  
\end{definition}

The random dictator share is the same as what \cite{lackner2020PV,LacknerMaly2023} define as upper and lower quota. What we show next is that $MMS^{adapt}$ is ``upper quota compliant'', or, in our terms:

\begin{lemma}
For any $n$ agents, $m$ decisions, and preference matrix $M$, it holds that for any agent $i$:
$MMS^{adapt}_i \leq RDS_i$.
\end{lemma}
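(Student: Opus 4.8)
The plan is to exploit the elementary fact that a minimum over a finite nonempty set is at most the average over that set, and to apply it to the adversary's minimization over permutations. Fix an agent $i$ and an \emph{arbitrary} partition $\{X_1, \ldots, X_n\}$ of $[m]$. Rather than analyze the worst permutation $\sigma$, I would bound the inner minimum by the expectation over a uniformly random permutation:
$$\min_{\sigma \in S_n} \sum_{j=1}^n d_H(M_i, M_{\sigma(j)} ; X_j) \;\leq\; E_{\sigma \sim UNI(S_n)}\!\left[\sum_{j=1}^n d_H(M_i, M_{\sigma(j)} ; X_j)\right].$$

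Next I would evaluate this expectation by linearity together with the key observation that, for a uniformly random permutation $\sigma$, the marginal distribution of $\sigma(j)$ is uniform on $[n]$ for each fixed $j$ (each value appears as $\sigma(j)$ in exactly $(n-1)!$ of the $n!$ permutations). Hence
$$E_{\sigma}\!\left[\sum_{j=1}^n d_H(M_i, M_{\sigma(j)} ; X_j)\right] = \sum_{j=1}^n \frac{1}{n}\sum_{i'=1}^n d_H(M_i, M_{i'} ; X_j) = \frac{1}{n}\sum_{i'=1}^n \sum_{j=1}^n \sum_{k \in X_j} 1[M_i[k] = M_{i'}[k]].$$

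The final step is to collapse the double sum over bundles: because $\{X_1, \ldots, X_n\}$ is a partition of $[m]$, the sets $X_j$ cover each index $k \in [m]$ exactly once, so $\sum_{j=1}^n \sum_{k \in X_j}$ may be replaced by $\sum_{k=1}^m$. This rewrites the right-hand side (after reordering the sums over $i'$ and $k$) exactly as $RDS_i$. Since this bound $RDS_i$ is independent of the chosen partition, I can then take the maximum over all partitions on the left, yielding $MMS^{adapt}_i \leq RDS_i$ as required.

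I do not expect a genuine obstacle here: the argument is a clean ``min $\leq$ average'' averaging trick, and the only two points requiring care are confirming that the marginal of $\sigma(j)$ is uniform on $[n]$ and noting that it is precisely the partition property that makes the per-bundle averages reassemble into the full random-dictator expectation over all $m$ decisions.
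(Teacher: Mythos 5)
Your proof is correct and follows essentially the same route as the paper: the paper likewise bounds the adversary's minimum by the expectation under a uniformly random permutation (stated there as a claim for an arbitrary distribution $D$, then specialized to $UNI(S_n)$) and collapses the marginals to obtain $RDS_i$. The only cosmetic difference is that the paper inserts a floor, yielding the marginally stronger bound $MMS^{adapt}_i \leq \lfloor RDS_i \rfloor$, which your argument also gives for free since $MMS^{adapt}_i$ is an integer.
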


\begin{proof}

The lemma is a corollary of the following general claim:

\begin{claim} 
\label{clm:expectation_bound}
For any distribution $D$ over permutations, 
$$MMS_i^{adapt} \leq \max_{\{X_1, \ldots, X_n\}} \lfloor E_{\sigma \sim D}[\sum_{j=1}^n d_H(M_i, M_{\sigma(j)} ; X_j) \rfloor $$
\end{claim}
\begin{proof}
The claim is immediate by the fact that for any fixed choice of partition $X_1, \ldots, X_n$, the worst adversary choice $\sigma$ is worse for the agent $i$ than any randomized adversary choice $D$. 
\end{proof}

Then, choosing $D = UNI(S_n)$, we have:

\[
\begin{split}
& MMS_i^{adapt} \leq \max_{\{X_1, \ldots, X_n\}} \lfloor E_{\sigma \sim UNI(S_n)}[\sum_{j=1}^n d_H(M_i, M_{\sigma(j)} ; X_j) ] \rfloor \\
& \leq \lfloor \sum_{j=1}^m E_{i'\sim UNI([n])} 1[M_{i'}[j] = M_i[j]] \rfloor = \lfloor RDS_i \rfloor \leq RDS_i.
\end{split}
\]
\end{proof}

The MMS value can be strictly lower than the RDS. We show this by an example. 
Proposition~4 of \cite{lackner2020PV} shows a generic example of why RDS can not always be guaranteed. We consider their example with $m=2$:

\begin{example}
\label{ex:mms_vs_rds}
Consider $n=4$ agents and $m=2$ decisions:
$$\begin{bmatrix} 1 & 1 \\
1 & 0  \\ 0 & 1 \\
0 & 0 \end{bmatrix}.$$

The RDS value of every agent is $1$ (as they have probability $\frac{1}{2}$ of getting each of the two decisions), while the MMS value of every agent is $0$. 
\end{example}

\section{Existence and Approximation of Adaptive MMS in Offline and Online Settings}
\label{sec:existence}

\begin{definition}
We say that an algorithm is \textit{online} if its decision at step $j$ only depends on the first $j$ columns of $M$. 
\end{definition}


\subsection{Small Values of $n$}

It is straight-forward to show that with $n=2$ agents, $MMS^{adapt}$ can be guaranteed, by deciding all consensus decisions with the consensus, and splitting the contested decisions between the two agent equally (up to a rounding error). We thus start by considering $n=3$.

We introduce the following notation: Let $\delta_i$ be the number of decisions $j$ so that $M_i[j] \neq M_{i+1 \mod 3}[j] = M_{i+2 \mod 3}[j]$, i.e., the number of decisions where agent $i$ opposes the other agents. Let $\delta = \delta_1 + \delta_2 + \delta_3$. 

\begin{corollary}
\label{corr:mms_adapt_n=3_bound}
For $n=3$, and every $1\leq i\leq 3$,
$$MMS_i^{adapt} \leq m - \delta +  \lfloor \frac{2}{3}\delta_2 + \frac{2}{3} \delta_3 + \frac{1}{3} \delta_1 \rfloor \leq m - \lceil \frac{\delta}{3} \rceil.$$

\end{corollary}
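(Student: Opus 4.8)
The plan is to derive the corollary directly from the preceding lemma (equivalently, from Claim~\ref{clm:expectation_bound} instantiated with the uniform distribution $D = UNI(S_3)$) by explicitly evaluating $RDS_i$ in terms of $\delta_1,\delta_2,\delta_3$. First I would record the structural observation that for $n=3$ binary votes every decision is either a \emph{consensus} decision, where all three agents agree, or a $2$--$1$ split, where exactly one agent is the lone dissenter. The split decisions are precisely those counted by $\delta_1+\delta_2+\delta_3=\delta$, so the number of consensus decisions is exactly $m-\delta$.

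Fix the agent under consideration; by symmetry (relabeling the agents) it suffices to treat $i=1$, which is the form in which the first displayed inequality is written, with the coefficient $\tfrac13$ sitting on $\delta_1$, the dissent count of the agent itself. I would then compute the expected utility of agent $1$ under a uniformly random dictator, i.e.\ $RDS_1$, one decision-type at a time. On a consensus decision agent $1$ agrees with whoever decides, contributing $1$; summing gives $m-\delta$. On a $\delta_1$ decision agent $1$ disagrees with both other agents, so she gains the decision only when she herself is the dictator, contributing $\tfrac13$ each. On a $\delta_j$ decision with $j\neq 1$ agent $1$ lies in the majority and therefore agrees with the dictator unless the dictator is the lone dissenter $j$, contributing $1-\tfrac13=\tfrac23$ each. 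Hence
\[
RDS_1 = (m-\delta) + \tfrac13\delta_1 + \tfrac23\delta_2 + \tfrac23\delta_3 .
\]
Applying the preceding lemma---whose proof in fact establishes $MMS^{adapt}_1 \le \lfloor RDS_1\rfloor$---and pulling the integer $m-\delta$ out of the floor yields exactly the first inequality. (Equivalently, invoking Claim~\ref{clm:expectation_bound} with $D=UNI(S_3)$: every bundle is assigned to each agent with marginal probability $\tfrac13$ independently of the bundle index, so the expected utility is the same for every partition and the outer $\max$ over partitions is attained trivially at the value above.)

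For the second inequality I would use two elementary facts. Since $\delta_1\ge 0$ we have $\tfrac13\delta_1+\tfrac23\delta_2+\tfrac23\delta_3 \le \tfrac23(\delta_1+\delta_2+\delta_3)=\tfrac23\delta$, and the floor is monotone, so $\lfloor \tfrac13\delta_1+\tfrac23\delta_2+\tfrac23\delta_3\rfloor \le \lfloor \tfrac23\delta\rfloor$. Combining this with the identity $\delta-\lceil \tfrac\delta3\rceil = \lfloor \tfrac{2\delta}{3}\rfloor$ (checked on the three residues of $\delta$ modulo $3$) gives $m-\delta+\lfloor \tfrac13\delta_1+\tfrac23\delta_2+\tfrac23\delta_3\rfloor \le m-\lceil \tfrac\delta3\rceil$, as required; the same argument applies verbatim to any agent $i$ after relabeling.

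The computation is short, so the only genuine care is in the bookkeeping of the agreement pattern per decision type---in particular the fact that a majority agent agrees with the dictator on a split decision precisely when the dictator is not the lone dissenter---and in justifying that, under the uniform permutation distribution, the per-bundle assignment marginals are identical across bundles, which is what collapses the maximization over partitions. The floor/ceiling identity is routine and can be verified by a three-case split on $\delta \bmod 3$.
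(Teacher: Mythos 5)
Your proposal is correct and follows essentially the same route as the paper: both instantiate Claim~\ref{clm:expectation_bound} with the uniform distribution over permutations, do the identical per-type bookkeeping (consensus decisions contribute $m-\delta$, own-dissent decisions $\tfrac13$ each, majority decisions $\tfrac23$ each), pull the integer $m-\delta$ out of the floor, and finish with the bound $\tfrac13\delta_1 \le \tfrac23\delta_1$ together with the identity $m-\delta+\lfloor\tfrac{2\delta}{3}\rfloor = m-\lceil\tfrac{\delta}{3}\rceil$. Phrasing the intermediate quantity as $RDS_1$ rather than as the expectation in Claim~\ref{clm:expectation_bound} is purely cosmetic, and your extra care in verifying the floor/ceiling identity and the partition-independence of the uniform-permutation expectation only makes explicit what the paper leaves implicit.
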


\begin{proof}
We consider w.l.o.g. the perspective of agent $1$. 
By Claim~\ref{clm:expectation_bound}, using the uniform distribution $D = UNI$, each decision has a probability of $\frac{1}{3}$ to end up being decided by a specific agent $j$. Thus, decisions of type $\delta_1$ have probability of $\frac{1}{3}$ to be decided in favor of agent $1$, while decisions of type $\delta_2$ and $\delta_3$ have a probability of $\frac{2}{3}$. Decisions that are unanimous are always decided in agent $1$'s favor and so they must be an integer. We get:

$MMS_1^{adapt} \leq \lfloor \frac{2}{3}\delta_2 + \frac{2}{3} \delta_3 + \frac{1}{3} \delta_1 + m - \delta \rfloor \leq \lfloor \frac{2}{3}\delta + m - \delta \rfloor \leq m - \delta + \lfloor \frac{2\delta}{3} \rfloor = m - \lceil \frac{\delta}{3} \rceil $. 

\end{proof}

\begin{definition}
\label{def:ptrra}
Per-Type Round-Robin Algorithm:

For the case of $n=3$, we define the following algorithm. When facing a unanimous decision, it goes by the consensus. For each of the $3$ types of non-unanimous decisions $\delta_i$ it keeps a dedicated counter $c_i$, and decides with the majority whenever the counter has $c_i \mod 3 \in \{0,1\}$, and with the minority when $c_i \mod 3 = 2$. 
\end{definition}

\begin{lemma}
\label{lem:mms_adapt_n=3}
For $n=3$, the per-type round-robin algorithm guarantees $MMS_i^{adapt}$. 
\end{lemma}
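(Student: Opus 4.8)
The plan is to fix a single agent, say agent $1$ (by the cyclic symmetry of the algorithm and of the three non-unanimous types), compute \emph{exactly} how many decisions the Per-Type Round-Robin Algorithm sends agent $1$'s way, and then compare this count against the upper bound of Corollary~\ref{corr:mms_adapt_n=3_bound}. The first thing to observe is that the outcome produced on the type-$\delta_j$ decisions is governed solely by the dedicated counter $c_j$, which is incremented only when a decision of type $\delta_j$ is encountered. Consequently, the number of type-$\delta_j$ decisions decided ``with the minority'' versus ``with the majority'' depends only on $\delta_j$ and not on the order in which the columns of $M$ arrive; this is precisely what lets an online algorithm pin down each per-type tally, and it is the reason the counters are kept per-type rather than globally.

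Next I would tally agent $1$'s utility term by term. The $m-\delta$ unanimous decisions all go agent $1$'s way. On the $\delta_1$ decisions where agent $1$ is in the minority, the algorithm decides with agent $1$ exactly on the steps with $c_1 \bmod 3 = 2$, which occur $\lfloor \delta_1/3 \rfloor$ times. On the $\delta_2$ and $\delta_3$ decisions agent $1$ sits in the majority, so it wins on the two-out-of-three steps with $c_j \bmod 3 \in \{0,1\}$, i.e.\ $\lceil 2\delta_2/3 \rceil$ and $\lceil 2\delta_3/3 \rceil$ times respectively. Summing,
\[
u_1(A) = (m-\delta) + \left\lfloor \tfrac{\delta_1}{3} \right\rfloor + \left\lceil \tfrac{2\delta_2}{3} \right\rceil + \left\lceil \tfrac{2\delta_3}{3} \right\rceil .
\]
The same single outcome $A$ serves all three agents: a type-$\delta_j$ decision that the minority agent $j$ loses is simultaneously a win for the two majority agents, so the three per-agent tallies are mutually consistent.

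It then remains to check that this quantity is at least the bound $m-\delta + \lfloor \tfrac{1}{3}\delta_1 + \tfrac{2}{3}\delta_2 + \tfrac{2}{3}\delta_3 \rfloor$ of Corollary~\ref{corr:mms_adapt_n=3_bound}, which itself dominates $MMS_1^{adapt}$. Cancelling $m-\delta$, this reduces to the elementary inequality
\[
\left\lfloor \tfrac{a}{3} \right\rfloor + \left\lceil \tfrac{2b}{3} \right\rceil + \left\lceil \tfrac{2c}{3} \right\rceil \;\geq\; \left\lfloor \tfrac{a}{3} + \tfrac{2b}{3} + \tfrac{2c}{3} \right\rfloor
\]
for nonnegative integers $a,b,c$. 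I would prove this by noting that the fractional part of $a/3$ is at most $2/3$, while $\lceil 2b/3\rceil \geq 2b/3$ and $\lceil 2c/3\rceil \geq 2c/3$; hence the argument of the right-hand floor is at most $N + 2/3$, where $N$ is the (integer) left-hand side, and taking floors gives $\lfloor N + 2/3\rfloor = N$. Chaining $u_1(A) \geq$ (the Corollary bound) $\geq MMS_1^{adapt}$ and invoking the cyclic symmetry to handle agents $2$ and $3$ completes the argument.

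A word on where the care lies: there is no deep obstacle, as the proof is essentially exact bookkeeping. The one point that must be handled cleanly is the floor-versus-ceiling matching, since the algorithm rounds the minority type \emph{down} and the two majority types \emph{up}, and one must verify that this particular rounding pattern never drops below the floor of the averaged expression controlling $MMS^{adapt}$. The order-independence observation of the first step is equally essential, because without it the per-type counts---and hence the guarantee for an online algorithm---would not even be well defined.
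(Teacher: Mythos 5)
Your proof is correct and follows essentially the same route as the paper's: you derive the identical exact tally $u_1(A) = (m-\delta) + \lfloor \delta_1/3 \rfloor + \lceil 2\delta_2/3 \rceil + \lceil 2\delta_3/3 \rceil$ and then chain it through the bound of Corollary~\ref{corr:mms_adapt_n=3_bound}. The only difference is that you spell out the floor/ceiling inequality (via the fractional-part argument) and the order-independence of the per-type counters, both of which the paper leaves implicit.
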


\begin{proof}
W.l.o.g., we consider the perspective of agent $1$. 
The algorithm guarantees 
\[
\begin{split}
    & u_1(ALG) = m - \delta + \lceil \frac{2}{3} \delta_2 \rceil + \lceil \frac{2}{3} \delta_3 \rceil + \lfloor \frac{1}{3} \delta_1 \rfloor \geq \\
    & m - \delta + \lfloor \frac{2}{3} \delta_2 + \frac{2}{3} \delta_3 + \frac{1}{3} \delta_1 \rfloor \stackrel{Corollary~\ref{corr:mms_adapt_n=3_bound}}{\geq} MMS_1^{adapt}.
\end{split}
\]

\end{proof}
 
The success of this simple algorithm in the case of $n=3$ may give hope that this could be generalized to larger values of $n$. We next show that this is not true, for a large class of algorithms that we call ``graceful sequences''. Intuitively, graceful sequences define a consistent way of determining, for each decision type, how will it be treated upon the first occurrence, second occurrence, and so on. For example, with $n=10$ agents, and for a decision type that has the first six agents prefer an outcome of $0$, and the others prefer an outcome of $1$, a graceful sequence may determine that the first six occurrences of this decision type will favor the majority, the next four will favor the minority, and so on. A different graceful sequence, for example, would alternate between majority and minority decisions.

\begin{definition}
Let $\rho: \{0,1\}^n  \rightarrow \{0,1\}^n$ be an even function (i.e., two boolean vectors that are a negation of each other are mapped in the same way). 

Then, the algorithm $A_{\rho}$ is a \textit{graceful sequence} based on $\rho$ if it cyclically maps the $k$ appearance of a decision-type $t$ (where in $k$ we account also for appearances of $t$'s negation) to $[\rho(t)]_{k \mod n}$. 

\end{definition}

For example, for $n=3$, Definition~\ref{def:ptrra} defines a graceful sequence for $\rho$ such that $\rho(t) = \{Maj(t), Maj(t), Min(t)\},$ where we use $Maj: \{0,1\}^n \rightarrow \{0,1\}, Min: \{0,1\}^n \rightarrow \{0,1\}$ to denote the majority (respectively, minority) boolean functions. 

Clearly, implementing a graceful sequence requires exponential space (a counter for every possible decision-type), and is not efficient in any way: We use it to reason about online algorithms for decision-making with small number of agents $n$.  
We note that graceful sequences are a vast class of algorithms (the amount of \textit{periodic} sequences  is  double exponential in $n$), but it is simple in the sense that the decisions for each type are done independent of the other types, or the overall instance. We show that this class does not suffice to guarantee existence of a full $MMS^{adapt}$ with $n=4$. 

\begin{restatable}{theorem}
{pickingSeqFour}
\label{thm:pickingSequence4}
With $n = 4$, any graceful sequence $A_{\rho}$ guarantees at most $\frac{4}{5}-MMS^{adapt}$. 
\end{restatable}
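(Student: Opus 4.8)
The plan is to prove the impossibility by showing that for \emph{every} even $\rho$ there is a preference matrix on which $A_\rho$ attains utility at most $\tfrac45 MMS^{adapt}_i$ for some agent $i$. Fix $\rho$. The leverage point is the defining rigidity of a graceful sequence: the outcome of the $k$-th occurrence of a canonical type $t$ is the fixed bit $[\rho(t)]_{k\bmod 4}$, chosen \emph{before} and \emph{independently} of the rest of the instance. Consequently, once $\rho$ is fixed, both the long-run and the \emph{partial-period} winning fractions of every type are pinned down, and the only freedom left to the adversary is the multiset of types and their multiplicities. I would build the hard instance out of a mixture of the three balanced ($2$--$2$) types and the four $3$--$1$ types, tuned so that the $MMS^{adapt}$-optimal outcome must \emph{correlate} decisions across types -- awarding a target agent extra wins on one type precisely to compensate for losses forced on another -- which $A_\rho$ is structurally unable to do.

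Concretely, I would fix a target agent (say agent $1$) and carry out two computations on the constructed instance. First, a lower bound on $MMS^{adapt}_1$: exhibit one explicit partition $X_1,\dots,X_4$ and verify that \emph{every} adversarial assignment $\sigma\in S_4$ still yields agent $1$ the intended value; the natural candidate is the balanced-composition partition in which each bundle carries an equal share of every type, so that (as in the proof of Claim~\ref{clm:expectation_bound}) the guaranteed value is essentially twice a single bundle's mass and matches the $\lfloor RDS_1\rfloor$ ceiling coming from $MMS^{adapt}_1\le RDS_1$. Second, an upper bound on $u_1(A_\rho)$: because type $t$ contributes to agent $1$ exactly the number of period positions on which $[\rho(t)]_\cdot$ agrees with agent $1$'s side, the contribution is determined by the count of $t$ modulo $4$ together with the fixed positions of $\rho(t)$. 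Choosing the multiplicity of each relevant type $\equiv 3 \pmod 4$, and routing agent $1$ onto the short side of each pattern's partial period, aims to drive agent $1$ strictly below its proportional entitlement on several types at once, while the complementary agents silently absorb the slack.

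Combining the two bounds is meant to yield the ratio $\tfrac45$, after a short case analysis over how $\rho$ places its bits inside the length-$4$ period (``blocked'' versus ``alternating'' patterns are handled separately, but both are pinned once $\rho$ is fixed). The hard part -- and the crux of the whole argument -- is that the two computations pull in opposite directions: forcing the rigid pattern to underperform wants \emph{small} per-type counts, where partial-period effects are strongest, whereas making $MMS^{adapt}_1$ large enough that this underperformance is a constant factor (rather than an additive $O(1)$ slip that washes out as $m\to\infty$) wants \emph{large}, richly overlapping counts. Reconciling them requires an instance whose $MMS^{adapt}$-optimal partition genuinely \emph{interleaves} several types, so that $A_\rho$ loses a fixed fraction even though each type in isolation could be decided ``fairly''; establishing the exact value of the adaptive $\min_\sigma$ for such an interleaved instance is the most delicate step, and is where I expect the bulk of the casework to lie.
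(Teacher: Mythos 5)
Your proposal is a research plan rather than a proof: the step you yourself flag as the crux (evaluating $\min_{\sigma}$ for the interleaved instance, and the case analysis over $\rho$) is exactly what is missing, and without it nothing is established. But beyond incompleteness, the plan points in a direction that cannot work for all $\rho$. The theorem only requires exhibiting, for each $\rho$, \emph{one} finite instance on which some agent gets at most $\frac{4}{5}$ of her $MMS^{adapt}$; no asymptotics in $m$ are needed or even possible. Indeed, for the ``proportional'' $\rho$ (decide each $3$--$1$ type with the majority three times out of four, alternate on the $2$--$2$ types), per-type rounding loses only $O(1)$ per type, so $u_i(A_\rho) \geq RDS_i - O(1) \geq MMS^{adapt}_i - O(1)$; hence on instances with large type-counts the ratio tends to $1$, and your stated preference for ``large, richly overlapping counts'' is self-defeating. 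The separation must come from \emph{small} instances with partial periods --- precisely the regime where your balanced-composition partition (which needs each type-count divisible by $4$) is unavailable, and where its guarantee degrades by additive rounding terms of the same order as the partial-period losses you are trying to exploit. So the two halves of your argument cancel each other, which is the tension you noticed but did not resolve.

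The paper resolves it with two devices you are missing. First, an organizing pigeonhole over the three ambiguous ($2$--$2$) types $t_2,t_3,t_4$: for any even $\rho$, either some agent loses the first occurrence of all three (then the $3$-decision instance $\{t_2,t_3,t_4\}$ gives that agent utility $0$ while one bundle containing all three decisions certifies $MMS^{adapt} \geq 1$, since every other agent agrees with her somewhere), or some agent, say agent $1$, wins all three first occurrences; further tiny instances then pin down $\rho(\alpha_i)_0 = 1$, that one of $\rho(\alpha_i)_1,\rho(\alpha_i)_2$ is $0$, and $\rho(t_i)_0=0$, on pain of ratios $0$ or $\frac{1}{2}$. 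Second, the $MMS^{adapt}$ lower bound on the final $8$-decision instance (three copies of $\alpha_3$, three of $\alpha_4$, one each of $t_3,t_4$) does \emph{not} come from matching $RDS$: it uses the asymmetric mixed-type partition $\bigl\{\{\alpha_4,t_3\},\{\alpha_3,t_4\},\{\alpha_4,\alpha_3\},\{\alpha_4,\alpha_3\}\bigr\}$, in which every bundle contains, for every other agent, at least one decision they share with agent $2$; hence any adversary assignment yields at least $1$ per bundle plus $2$ from agent $2$'s own bundle, i.e.\ $MMS^{adapt}_2 \geq 5$, while the pinned-down $\rho$ forces $u_2(A_\rho) \leq 4$, giving the ratio $\frac{4}{5}$. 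Without this style of bundle construction, and without the case structure on $\rho$, your two computations cannot be made to meet at a constant factor below $1$.
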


Contrary to this negative result, we can show for the case of $n=4$ that $MMS^{adapt}$ can be fully guaranteed \textit{offline}, and present an algorithm to do so. The algorithm uses a graceful sequence as its main primitive, but leaves a few decisions out that are decided based on the overall situation. 

We start by defining \textit{ambiguous} decision-types as these without a clear majority, i.e., two agents are for the decision and two against it. There are three ambiguous decision-types, which we denote $t_2, t_3, t_4$ by whichever agent agrees with agent $1$. We denote the non-ambiguous, non-consensus decision types $\alpha_1, \ldots, \alpha_4$ by the agent that opposes all others. For ease of notation, we interchangeably use a decision-type $t_i$ or $\alpha_i$ and the amount of decisions of this type in $M$. We let $C$ be the number of consensus decision in $M$. 

Consider the graceful sequence $\rho^*$ that decides the consensus decision-type with the consensus decision, and 
any other non-ambiguous decision type $t$ with:
$$\rho^*(t)_i = \begin{cases} Maj(t) & i \mod 4 \neq 3 \\
Min(t) & i \mod 4 = 3\end{cases}$$ (i.e., follows the majority for the first three occurrences of the decision, and the minority for the fourth one, and so on cyclically).  For any ambiguous decision type $t$, it has:
$\rho^*(t)_i \mod 2 \neq \rho^*(t)_{i+1} \mod 2$ (i.e., makes alternating decisions). 
For each agent $2 \leq i \leq 4$, we let $\eta_i(M) = \sum_{j \neq i} \frac{3}{4}\alpha_j  + \frac{1}{4}\alpha_i + C +  \sum_{j=2}^4 \frac{1}{2}t_j(M)$. 

\begin{algorithm}
\SetAlgoLined
\DontPrintSemicolon
\KwIn{Agents $[n=4]$, decisions $[m]$, preference matrix $M$}
\KwOut{Decision vector $A$}

Let $S = \emptyset, i^* = 1$. 

\For{$i=2$ to $4$} {
    If the ambiguous decision type $t_i$ has an odd number of occurrences, remove the last occurrence of it from $M$ and add it to $S$.
}

Run $A_{\rho^*}$ over $M$.

\If{$\exists i, u_i(A_{\rho^*}) < \eta_i(M) $} {

Let $i^* = i$. 
}

Let agent $i^*$ decide the remaining decisions in $S$. 

 \caption{Deferred Ambiguity Algorithm for $n=4$}
 \label{alg:defered_ambiguity}
\end{algorithm}

\begin{restatable}{theorem}
{deferedAmbig}
\label{thm:deferedAmbiguity}
With $n=4$, for any agent $i$, Algorithm~\ref{alg:defered_ambiguity} guarantees $MMS^{adapt}$. 
\end{restatable}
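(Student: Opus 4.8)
The plan is to show that for every agent $i\in\{1,2,3,4\}$, the final outcome $A$ produced by Algorithm~\ref{alg:defered_ambiguity} satisfies $u_i(A)\geq MMS_i^{adapt}$. The key observation is the structure of the argument: first establish that $\eta_i(M)$ (on the \emph{trimmed} matrix, where odd ambiguous types have had one occurrence removed) is an \emph{upper bound} on $MMS_i^{adapt}$ for the original instance, paralleling Corollary~\ref{corr:mms_adapt_n=3_bound}; then show that the combination of running $A_{\rho^*}$ and handing the deferred set $S$ to a chosen agent $i^*$ recovers at least $\eta_i$ utility for \emph{every} agent simultaneously.

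\textbf{Step 1 (an RDS-style upper bound).} First I would derive, via Claim~\ref{clm:expectation_bound} with the uniform distribution, that $MMS_i^{adapt}\leq \lfloor \eta_i(M)\rfloor$ for the trimmed matrix, and then argue that removing at most one occurrence from each of the three ambiguous types changes $MMS_i^{adapt}$ by a bounded amount that is already absorbed into the definition of $\eta_i$. For agent $1$ specifically, $\eta_1$ is not listed, so I would first handle agent $1$ by the analogous formula $\eta_1(M)=\sum_{j}\tfrac34\alpha_j\cdot(\text{appropriate weighting})+C+\sum_j \tfrac12 t_j$ and verify the expectation computation: each non-ambiguous non-consensus type contributes $\tfrac34$ (probability the opposing agent is assigned to the minority position is $\tfrac34$ from $i$'s viewpoint when $i$ is in the majority, $\tfrac14$ when $i$ is the sole opposer), each ambiguous type splits evenly to $\tfrac12$, and consensus types contribute fully. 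The floor justification mirrors the $n=3$ corollary exactly.

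\textbf{Step 2 (the graceful sequence attains $\eta_i$ on non-ambiguous, near-attains on ambiguous).} Here I would verify that $A_{\rho^*}$ on its own delivers at least $\lfloor \eta_i\rfloor$ to every agent \emph{except} for a possible deficit caused by the ambiguous types and their parity. Because $\rho^*$ follows $(Maj,Maj,Maj,Min)$ cyclically on non-ambiguous types, agent $i$ gets $\lceil \tfrac34\alpha_j\rceil$ or $\lfloor\tfrac14\alpha_i\rfloor$ as appropriate — the same rounding slack exploited in Lemma~\ref{lem:mms_adapt_n=3}. For ambiguous types, alternating decisions give each of the two agreeing sides exactly $\lfloor t_j/2\rfloor$ or $\lceil t_j/2\rceil$; after trimming to even counts, each such type splits into exactly $t_j/2$ for both sides, so no agent is shorted on the trimmed part.

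\textbf{Step 3 (the deferred set fixes the unique deficient agent).} The main obstacle, which I expect to dominate the proof, is arguing that at most one agent can fall short of $\eta_i$ after running $A_{\rho^*}$, so that the single conditional choice of $i^*$ suffices to repair everyone. I would show this by a counting/averaging argument: the three deferred decisions in $S$ correspond exactly to the trimmed odd-parity ambiguous occurrences, and summing the potential deficits across agents reveals that the total shortfall is small enough that it concentrates on one agent. Giving $S$ to that agent $i^*$ adds enough of its preferred decisions to push $u_{i^*}(A)\geq \eta_{i^*}$, while the other agents were already above their thresholds and only a subset of $S$ may go against them — so I must check that the loss to the non-$i^*$ agents from the deferred decisions does not push \emph{them} below their own $\eta$. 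This last compatibility check — that repairing $i^*$ does not break anyone else — is the delicate part, and I would handle it by a case analysis over which ambiguous types had odd parity (there are at most $2^3$ parity patterns, but symmetry among $t_2,t_3,t_4$ collapses these to a few representative cases), verifying in each that the $\lfloor \eta_i\rfloor$ bound still holds for all four agents after the deferred assignment.
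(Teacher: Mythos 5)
Your overall architecture matches the paper's: trim odd ambiguous occurrences into $S$, bound $MMS^{adapt}_i$ by the uniform-adversary expectation $\eta_i$ via Claim~\ref{clm:expectation_bound}, show that $A_{\rho^*}$ leaves at most one agent below $\eta_i(M)$, and repair that agent by handing it $S$. But two gaps remain. The smaller one is in your Step~3 argument for ``at most one deficient agent'': you attribute the potential deficits to the deferred ambiguous decisions and propose to sum deficits across agents. After trimming, the ambiguous types have even counts and split exactly, so they create no deficit at all; the only deficit comes from the rounding $\lfloor \frac{1}{4}\alpha_i \rfloor$ against the gains $\lceil \frac{3}{4}\alpha_j \rceil$ on the non-ambiguous types. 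Moreover, a summing/averaging argument cannot establish uniqueness: the sum of the four agents' deficits is always nonpositive, but that is perfectly compatible with two agents having positive deficits offset by the others' surpluses. The paper instead rules out a double deficit by a case analysis on $\alpha_{i_1}, \alpha_{i_2} \bmod 4$, deriving a contradiction from the two strict inequalities a double deficit would impose. This part of your plan is repairable by substituting that case analysis.

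The serious gap is the ``compatibility check'' for the non-$i^*$ agents, which cannot be closed by the tools you allot yourself (the expectation bound plus a parity case analysis). Concretely, take $|S|=2$, say $S$ contains one occurrence each of two ambiguous types $t_a, t_b$. Among the agents other than $i^*$, exactly one agrees with $i^*$ only on the third ambiguous type, hence gains nothing from the deferred decisions: $u_i(A) = u_i(A_{\rho^*};M)$, which can equal exactly $\eta_i(M)$ when $\eta_i(M)$ is an integer (e.g., all $\alpha_j \equiv 0 \pmod 4$). The expectation bound applied to the full instance only yields $MMS^{adapt}_i \leq \lfloor \eta_i(M) + 1 \rfloor = \eta_i(M) + 1$, leaving a gap of one decision that no enumeration of parity patterns closes. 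This is precisely where the paper works hardest: it analyzes the structure of near-optimal adversary permutations (counting how many of the $24$ permutations can achieve $\lfloor \eta_i(M) \rfloor$, or noting that if $MMS^{adapt}_i$ equals the integer $\eta_i(M)$ then \emph{every} permutation attains it, so the adversary has complete freedom), and concludes that the adversary can always route the bundles containing the $S$-decisions to agents opposing $i$, capping $MMS^{adapt}_i$ at $u_i(A_{\rho^*};M)$. Without an argument of this finer type, strictly stronger than Claim~\ref{clm:expectation_bound}, the theorem does not follow. (Incidentally, for $|S|\in\{0,1,3\}$ your route does suffice if you apply the expectation bound directly to $M \cup S$, since $\lfloor \eta_i(M) + \frac{1}{2}\rfloor \leq \lceil \eta_i(M) \rceil$ and $\lfloor \eta_i(M) + \frac{3}{2}\rfloor \leq \lceil \eta_i(M) \rceil + 1$ always hold; it is only $|S|=2$ that genuinely breaks it.)
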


\subsection{Large Values of $n$}

Finally, for $n\geq 7$, we can show a strong result.

\begin{theorem}
\label{thm:online_mms_impossibility}
No \textit{online} algorithm can guarantee  $MMS^{adapt}$ with $n\geq 7$. 
\end{theorem}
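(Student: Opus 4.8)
The plan is to argue by contradiction via an \emph{adaptive adversary}. Suppose some online algorithm $ALG$ always outputs an $MMS^{adapt}$ outcome. The defining property of an online algorithm is that its decision on column $j$ depends only on $M_1,\dots,M_j$; hence for a fixed prefix $P$ of columns, $ALG$'s decisions on $P$ are \emph{locked} and identical across every extension of $P$. I would exploit exactly this: build $M$ column by column, observe how $ALG$ commits on a contested prefix, and then choose a suffix — tailored to those commitments — on which no continuation lets $ALG$ meet the final-matrix $MMS^{adapt}$ of every agent simultaneously. Because $MMS^{adapt}_i$ is computed on the \emph{full} matrix while the prefix decisions are already spent, this is the online-specific leverage that the $n=4$ offline algorithm sidesteps precisely by \emph{deferring} its ambiguous decisions; an online algorithm cannot defer.

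Concretely, I would first feed a block of near-balanced contested decisions (for $n=7$, a $4$-$3$ split among a designated set of agents) and read off the multiset of $ALG$'s choices, i.e., how many copies it sent to the majority side versus the minority side. This commitment necessarily advantages some agents and disadvantages others on the prefix; by a pigeonhole/symmetry argument over the roles in the split, there is always an agent $i^\star$ that $ALG$ has under-served relative to what a future that promotes $i^\star$ will demand. I would then append a suffix of consensus and single-dissenter decisions calibrated to that $i^\star$. For the resulting full matrix I would lower-bound $MMS^{adapt}_{i^\star}$ by exhibiting one explicit partition into $n$ bundles (recall that any single partition is a feasible lower bound on the max-min value), and simultaneously upper-bound the best utility $ALG$ can still obtain for $i^\star$: its prefix contribution is fixed, and on the suffix, pushing every remaining decision toward $i^\star$ would either force some other agent below \emph{their} MMS (a sacrifice argument) or still fall short. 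The adversary selects which branch to realize only after seeing the prefix commitment, so $ALG$ cannot hedge against all branches at once.

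The threshold $n\ge 7$ should emerge from the arithmetic of these two bounds. The explicit partition realizing $MMS^{adapt}_{i^\star}$ needs enough ``spectator'' agents to absorb the non-$i^\star$ bundles, while the contested block must retain a strict, floor-surviving minority; combining the floor in the $MMS^{adapt}$ value (the Corollary~\ref{corr:mms_adapt_n=3_bound}-style counting, suitably generalized) with the fixed prefix deficit yields a fractional inequality that is satisfiable for small $n$ but violated once $n\ge 7$. I would verify this inequality directly on the constructed instance family.

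The main obstacle is the MMS accounting. I must lower-bound $MMS^{adapt}_{i^\star}$ tightly — a good partition is easy to propose, but proving it is good against the worst $\sigma$ takes care, since the adversary in the MMS game reassigns bundles — while upper-bounding $ALG$'s achievable utility uniformly over \emph{all} online algorithms rather than just graceful sequences, which is exactly why the branching, algorithm-dependent adversary is essential instead of a single fixed hard instance. Getting these two estimates to cross at precisely $n=7$, and confirming that the sacrifice argument (repairing $i^\star$ necessarily breaks another agent) holds on the chosen suffix, is where the real work lies.
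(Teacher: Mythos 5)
Your high-level leverage is the same as the paper's: an adaptive adversary that builds $M$ column by column, exploits the fact that an online algorithm's prefix decisions are locked, and tailors the continuation to those commitments. But what you have is a plan whose two load-bearing steps are exactly the ones left unproved, and the mechanisms you sketch for them would not close the accounting. First, your pigeonhole claim (``there is always an agent $i^\star$ that $ALG$ has under-served'') is not something you can get from a single contested prefix: after the prefix, the algorithm typically holds a \emph{surplus} for most agents relative to any one suffix, and it can spend that surplus to repair whichever agent you target. The paper's proof neutralizes this surplus explicitly: after the algorithm is forced to make a minority decision at some step $t$, the adversary replays each prefix decision-type until it appears exactly $n$ times, and then proves (via a system of $MMS^{adapt}$ inequalities over all agents, where the condition $n\geq 7$ first enters) that the algorithm is \emph{forced} to decide each repeated type exactly ``gracefully'' ($n-2$ with the majority, $2$ with the minority for two-agent minorities), so that on the repeated types the algorithm's output and every agent's MMS tally are in exact alignment, leaving only the single forced minority decision as an unbalanced deficit. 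Without such a balancing stage, your ``fixed prefix deficit'' is not well defined, and the sacrifice argument you invoke has nothing to bite on.

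Second, your choice of suffix — consensus plus single-dissenter decisions — cannot produce the contradiction even granting a deficit. Consensus columns are MMS-neutral, and single-dissenter types can be served gracefully (majority $n-1$ out of every $n$ occurrences) in a way that meets every agent's $MMS^{adapt}$ bound exactly, as in the $n=3$ analysis of Corollary~\ref{corr:mms_adapt_n=3_bound}; so the algorithm can always absorb that suffix. The paper instead needs \emph{two} forced minority decisions (one involving agent $1$, one involving a different agent $\mu$, with a case analysis on how their minority sets overlap), and then a cascade built from \emph{pair-minority} types: the sequences $S_{3a}$ and $S_{3b}$, the latter repeated cyclically $n-1$ times, in which each successive decision must go to the minority because some majority agent's accumulated agreements inside a single bundle make their $MMS^{adapt}$ constraint bind. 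The contradiction only appears at the end of all cycles, from the arithmetic $(n-2)^2+2 > n$, which (together with the stage-II inequality involving $\frac{n+3}{n-4}$) is where $n\geq 7$ actually comes from — not from a crossing point of prefix-deficit estimates as you conjecture. In short, you have correctly identified the adversary's general shape, but the repetition/balancing device, the need for two overlapping forced minority decisions, and the pair-minority cascade are the substance of the proof, and none of them is recoverable from the steps you describe.
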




We start by giving a high-level overview of the proof construction. Our goal is to force the algorithm to decide ``too many'' decisions with the minority. In stage I1, we present the algorithm with $n$ decisions with agent $1$ in the minority. The agent should thus be guaranteed at least one decision towards $MMS^{adapt}$ (by separating each decision into its own bundle). For all other decisions, that are decided for the majority, we repeat them to a total of $n$ repetitions in stage I2. We show that the algorithm must be ``graceful'' towards these repeated decisions: If the majority is a $k$ out of $n$ majority, it must decide $k$ of its repetitions for the majority, and $n-k$ for the minority. All in all, for all the repeated decisions, the algorithm is in exact accordance with their value towards $MMS^{adapt}$, and decides one additional decision for the minority. We are able to repeat this trick for stage II, and end up with more repeated graceful decisions, and another decision in favor of the minority (of some agent $\mu \neq 1$ and possibly some other agent). Finally, in stage III, these two decisions for the minority are enough to have a cascading effect that forces the algorithm to only make decisions in favor of the minority, eventually leading to a discrepancy from the $MMS^{adapt}$ value. 

\begin{proof}

\textit{Stage I1: Forcing the algorithm to decide with agent $1$ for the minority.}
\vspace{1mm}

Consider the following $n$ consecutive decisions:

$$S_1 = \begin{bmatrix}
 0 & 0 & \ldots & 0 & 0\\
 0 & 1 & \ldots & 1 & 1 \\
 1 & 0 & \ldots & 1 & 1\\
\ldots \\
 1 & 1 & \ldots & 0 & 1\\
\end{bmatrix}.$$

We have $MMS^{adapt}_1 \geq 1$ by considering $n$ singleton bundles. Thus, there must be some step $t$ where $A_t = 0$ and the algorithm decides for the minority. 

\vspace{1mm}
\textit{Stage I2: Balancing out the majority decisions.}
\vspace{1mm}

After step $t$, instead of proceeding with the rest of the $n-t$ decisions of stage I1, we introduce $n-1$ copies of each of the decision-types preceding step $t$. I.e., if $t=1$, we skip this stage. If $t=2$, we repeat $\begin{bmatrix} 0 \\
0 \\
1 \\
\ldots \\
1\end{bmatrix}$ $(n-1)$ times, etc. 
\begin{claim}
For each of the $t-1$ types that appear $n$ times, the algorithm decides $n-2$ times with the majority, and $2$ times with the minority. 
\end{claim}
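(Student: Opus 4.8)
The plan is to pin down, for each repeated type $s \in \{1,\dots,t-1\}$, the number $m_s$ of its $n$ copies that the algorithm decides for the minority, and to show $m_s = 2$ by sandwiching it between a lower and an upper bound, both forced by the $MMS^{adapt}$ guarantee applied to three carefully chosen agents. Throughout, recall that type $s$ (the decision at step $s$, which was decided for the majority) has minority set $\{1, s+1\}$ of size $2$ and majority of size $n-2$, that the single step-$t$ decision was decided for the minority, and write $M = \sum_{s=1}^{t-1} m_s$ for the total number of minority decisions among the repeated types.

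First I would express each relevant agent's realized utility in closed form. Agent $1$ prefers $0$ everywhere, so $u_1 = M + 1$, the $+1$ coming from the step-$t$ decision. Any agent $a \notin \{1,\dots,t+1\}$ is in the majority on every decision presented so far, so $u_a = (t-1)n - M$. The minority partner $s+1$ of type $s$ is in the minority only on type $s$, giving $u_{s+1} = m_s + \sum_{s'\neq s}(n - m_{s'}) = 2m_s + (t-2)n - M$.

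Next I would compute the three $MMS^{adapt}$ values, using Claim~\ref{clm:expectation_bound} with the uniform distribution for the (easy) upper bounds and explicit partitions for the matching lower bounds. The uniform-$\sigma$ bound gives $MMS_1^{adapt} \le 2(t-1)$, $MMS_a^{adapt} \le (t-1)(n-2)$, and $MMS_{s+1}^{adapt} \le (t-2)(n-2) + 2$. For the lower bounds I would exhibit the ``spread'' partition that places exactly one copy of each repeated type into each of the $n$ bundles, appending the step-$t$ decision to one bundle. Since the $n-t-1$ pure-majority agents agree with agent $1$ on nothing and hence annihilate whichever bundle they are assigned to, the adversary can null at most $n-t-1$ bundles, and a direct count shows that the $t+1$ surviving bundles, matched to $\{1,2,\dots,t+1\}$, still deliver exactly $2(t-1)$ to agent $1$; symmetric arguments (for agent $a$ only agent $1$ is a universal zero-giver, so at most one bundle can be nulled) recover the other two values, making all three inequalities tight.

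Finally I would combine the guarantees $u_i \ge MMS_i^{adapt}$. Agent $a$ yields $M \le 2(t-1)$; agent $1$ yields $M \ge 2(t-1)-1$; and agent $s+1$ yields $2m_s \ge M - 2t + 6$. If $M = 2(t-1)-1$, then each $m_s \ge 2$, forcing $\sum_s m_s \ge 2(t-1) > M$, a contradiction; hence $M = 2(t-1)$, whereupon $2m_s \ge 4$ for every $s$ while $\sum_s m_s = 2(t-1)$ over $t-1$ indices, which forces $m_s = 2$ throughout. I expect the main obstacle to be the $MMS^{adapt}$ lower bounds, i.e.\ verifying that the explicit partitions really achieve the expectation upper bounds against an optimal adversary; the delicate point is the bundle-to-agent bijection together with the pure-majority agents that zero out any bundle, and the boundary regimes where $t$ is close to $n$ (so that few or no pure-majority agents exist) will likely need to be handled separately.
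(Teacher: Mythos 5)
Your core argument is correct for $2 \le t \le n-2$, and it takes a genuinely different route from the paper at one key point. The paper never invokes a pure-majority agent: it works only with agent $1$ and the $t-1$ minority partners, obtains your bound $M \le 2(t-1)$ (equivalently, $k_1+\cdots+k_{t-1} \ge (n-2)(t-1)$ in its notation, where $k_s = n-m_s$) by summing the $t-1$ partner inequalities, and then pins down each $k_s$ by pairwise manipulations; your case analysis on $M \in \{2t-3,\, 2t-2\}$ is a tidier finish. The gap is exactly where you suspected, and it is genuine rather than cosmetic: the pure-majority agent $a \notin \{1,\dots,t+1\}$ that you need for $M \le 2(t-1)$ exists only when $t \le n-2$, while stage I1 only forces the algorithm to capitulate at some step $t \le n$; an algorithm may well decide the first $n-1$ decisions with the majority, so $t = n$ and $t = n-1$ are cases that actually occur. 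In that regime every agent other than agent $1$ is a minority partner of some presented type, your upper bound on $M$ has no derivation, and the sandwich collapses; ``handle separately'' is an obligation there, not a remark.

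The repair is precisely the paper's summation step: adding the inequalities $u_{s+1} = 2m_s + (t-2)n - M \ge (n-2)(t-2)+2$ over all $s \in \{1,\dots,t-1\}$ gives $(t-3)M \le 2(t-1)(t-3)$, hence $M \le 2(t-1)$ for every $t \ge 4$, which covers $t \in \{n-1,n\}$ since $n \ge 7$. It is worth noting that the two derivations are complementary: the paper's summation is vacuous at $t=3$ and points the wrong way at $t=2$ (there it only yields $m_1 \ge 2$), which are exactly the cases where your pure-majority agent exists and closes the argument. So the union of your bound (valid for $t \le n-2$) and the paper's (valid for $t \ge 4$) covers all $2 \le t \le n$, and with that single repair your proof is complete and, arguably, cleaner than the paper's.
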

\begin{proof}

For agent $1$, it is in the minority of all $t-1$ decisions-types preceding step $t$. Thus, by including exactly one decision of each type in a bundle, (and assigning the additional decision of step $t$ in some arbitrary bundle), the agent is guaranteed at least $MMS^{adapt}_1 \geq 2\cdot (t-1)$. 

Any agent $2 \leq i \leq t$ is in the majority in $t-2$ of the decision-types preceding step $t$, and in the minority of one such decision-type. Thus, similarly, $MMS^{adapt}_i \geq (n-2) \cdot (t-2) + 2$. 

If we let $k_1, \ldots, k_{t-1}$ be the number of times the algorithm decides with the majority for each of the $t-1$ decision-types, and since the algorithm decides with the minority in the decision-type of step $t$, we get that for the algorithm to satisfy $MMS^{adapt}$, the following system of inequalities must hold:
\begin{equation}
\label{eq:system_stage1}
\begin{alignedat}{4}
 & n - k_1 &&+  n - k_2 + \ldots && + n - k_{t-1} &&\geq 2t - 3 \\
 & n - k_1  &&+  k_2 \qquad +  \ldots &&+  k_{t-1}  &&\geq (n-2)\cdot (t-2) + 2 \\
& k_1 &&+  n - k_2 + \ldots &&+  k_{t-1} &&\geq (n-2)\cdot (t-2) + 2 \\
& \ldots \\
 & k_1 &&+  k_2 \qquad + \ldots && +  n - k_{t-1} && \geq (n-2)\cdot (t-2) + 2.
\end{alignedat}
\end{equation}

If we sum up all but the first inequality:
\begin{equation}k_1 + \ldots + k_{t-1} \geq (n-2)(t-1).\end{equation}

The first inequality of Eq.~\ref{eq:system_stage1} can be written as:

\begin{equation}
\label{eq:sum_upper_bound}
(n-2)(t-1) + 1 \geq k_1 + \ldots + k_{t-1},
\end{equation}

Now, let $\Sigma = k_2 + \ldots + k_{t-1}$. Then, Eq.~\ref{eq:sum_upper_bound} can be rewritten as 
$$\Sigma + k_1 \leq  (n-2)(t-1) + 1,$$
and the second inequality of Eq.~\ref{eq:system_stage1} as:
\[
\begin{split}
    & \Sigma - k_1 \geq (n-2)\cdot (t-3), \text{ i.e.,} \\
& 2k_1 \leq (n-2)(t-1) + 1 - (n-2)\cdot (t-3) = 2(n-2) + 1. \end{split}
\]
Hence, $k_1 \leq n - 1.5$. Since $k_1$ is an integer, it must hold that $k_1 \leq n - 2$. We can repeat a similar argument for any $k_i$ up to $k_{t-1}$. By adding together Eq.~\ref{eq:sum_upper_bound} and the second inequality of Eq.~\ref{eq:system_stage1}, we also get that:

\[
\begin{split}
    & k_2 + \ldots k_{t-1} \geq (n-2)(t-2), \text{ and thus,} \\
& k_2 = \ldots = k_{t-1} = n-2,
\end{split}
\]
and $k_1 = n-2$ as well by the third equation of Eq.~\ref{eq:system_stage1}. 
\end{proof}

\vspace{1mm}
\textit{Stage II1: Forcing another minority decision.}
\vspace{1mm}

Let us consider that after stage I1 and I2, we have one decision-type which is not repeated $n$ times, where the algorithm goes by the minority agent $1$ and, possibly, some other agent $\ell$. Let $\mu \not \in \{1, \ell\}$ be some other agent in $[n]$. For simplicity of notation, when we specify our construction we assume $\ell = 2$ and $\mu = 3$, though it holds generally. 

Now consider the following $n-1$ consecutive decisions:

$$S_2 = \begin{bmatrix}
 1 & 1 & \ldots & 1 & 1 & 0\\
 1 & 1 & \ldots & 1 & 1 & 1 \\
 0 & 0 & \ldots & 0 & 0 & 0\\
 0 & 1 & \ldots & 1 & 1 & 1\\
 1 & 0 & \ldots & 1 & 1 & 1\\
 1 & 1 & \ldots & 1 & 1 & 1\\
\ldots \\
 1 & 1 & \ldots & 0 & 1 & 1\\
\end{bmatrix}.$$

Notice that the first $n-3$ decision-types has in the minority agent $\mu$ and another agent that was not in the minority at step $t$. The second to last decision-type has in the minority agent $\mu$ and agent $1$, and the last decision-type has only agent $\mu$ in the minority.

After stage I1 and I2, $\mu$ received exactly its $MMS^{adapt}$ from the algorithm for the repeated decision-type, and the additional minority decision went against it. Together with the additional $n-1$ decisions of this stage, we have \begin{equation}
\label{eq:mu_util}
MMS^{adapt}_{\mu} \geq 1 + u_{\mu}(ALG ; t + (n-1)(t-1)),\end{equation} where $ALG$ is the outcome of the algorithm and $u_{x}(A ; s)$ is the utility of agent $x$ from outcome $A$ by the end of step $s$. We get Eq.~\ref{eq:mu_util} by considering adding one of the $n$ decisions to every bundle. Thus, there must be some step $\tau$ during this stage (i.e., overall it is step $t + (n-1)(t-1) + \tau$), where $ALG[t + (n-1)(t-1) + \tau] = 0$. 

\vspace{1mm}
\textit{Stage II2: Balancing out the majority decisions.}
\vspace{1mm}

After step $\tau$, similarly to stage I2, we introduce $n-1$ copies of each of the $S_2$ decision-types preceding step $\tau$. 

\begin{claim}
For each of the types added at stage II1 before $\tau$, the algorithm decides with the majority exactly according to the number of agents in the majority (i.e., if there are $n-2$ agents in the majority, it decides with it $n-2$ times, and the same for if there are $n-1$ agents in the majority). 
\end{claim}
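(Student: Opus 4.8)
The plan is to rerun the Stage~I2 argument with essentially the same structure, now from the vantage point of agent $\mu$ rather than agent $1$, treating the whole of Stage~I as a fixed background. For each decision type that appears before step $\tau$ (and is therefore blown up to exactly $n$ copies in Stage~II2), write $k_p$ for the number of those $n$ copies that the algorithm decides with the majority, so $n-k_p$ are decided with the minority. The target is to show $k_p$ equals the majority size of type $p$: namely $n-2$ for the two-sided types and $n-1$ for the lone-$\mu$ type.

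The engine is the same separability fact used implicitly in Stage~I: if a type occurs exactly $n$ times, spreading one copy into each of the $n$ bundles makes its contribution to $MMS^{adapt}_i$ equal to the number of agents who agree with $i$ on it, \emph{independently of the adversary's permutation}, since each agent occupies exactly one bundle and contributes $1$ iff it agrees. This yields an additive lower bound $MMS^{adapt}_i \ge \sum_{\text{types}}(\text{agreement count of }i)$, in which the Stage~I types contribute a value already pinned down by the Stage~I2 claim (all decided at $k=n-2$), the two singleton ``freebie'' decisions at steps $t$ and $\tau$ contribute $0$ in the worst case, and the before-$\tau$ types contribute their agreement counts. Invoking Claim~\ref{clm:expectation_bound} with the uniform distribution (the $RDS$ bound) shows these lower bounds are in fact tight, so no agent can do better with a cleverer partition.

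With these quantities in hand I assemble the inequalities $u_i(ALG)\ge MMS^{adapt}_i$ for the agents that touch the before-$\tau$ types. Agent $\mu$, being in the minority of all of them, earns $\sum_p(n-k_p)$ from those types plus one minority win at step $\tau$; comparing with its lower bound gives an \emph{upper} bound on $\sum_p k_p$, exactly mirroring agent~$1$'s inequality in Stage~I. Each two-sided before-$\tau$ type has a \emph{distinct} minority partner $a_p\in\{4,\dots,n\}$, and one checks that every such agent finishes Stage~I exactly on budget; its inequality both caps the individual $k_p$ at the majority size $n-2$ and, summed over all partners, supplies the matching \emph{lower} bound on $\sum_p k_p$. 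The sandwich then collapses to $k_p=n-2$ for every two-sided before-$\tau$ type whose partner is on budget, just as in Stage~I.

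The genuine obstacle is the at-most-one before-$\tau$ type that lacks an on-budget minority partner. Concretely, after the first $n-3$ occurrences of $S_2$ are the clean two-sided types, the two ``special'' types are the lone-$\mu$ type (majority $n-1$, \emph{no} minority partner) and the $\mu$-with-$1$ type (whose minority partner is agent~$1$, which leaves Stage~I carrying one unit of surplus); the natural always-majority backstop, agent $\ell$, also carries one unit of Stage~I surplus. For whichever of these two special types precedes $\tau$, the relevant inequality is loosened by a unit and, against agent $\mu$'s own unit of step-$\tau$ slack, the sandwich widens to a window of size two and no longer pins the value---indeed one can exhibit configurations satisfying all $MMS^{adapt}$ constraints in which that single type is over-decided by one. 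I would contain the damage by having the adversary reveal the two special types last, so that at most one of them can precede $\tau$ (only when $\tau$ is the final step of Stage~II1), and then close the residual case by a dedicated accounting that propagates each special agent's surplus through \emph{both} stages---showing that the surplus is already spoken for and cannot also absorb an over-decision---rather than by the clean sandwich. Making this surplus bookkeeping airtight, most plausibly through a short case split on whether $\tau$ is the last step and on which special type is exposed, is where I expect the real work to lie.
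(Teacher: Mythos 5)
Your treatment of the clean case matches the paper: for $\tau \le n-2$ every before-$\tau$ type is two-sided with a distinct minority partner in $\{4,\dots,n\}$, and the Stage~I2 sandwich (agent $\mu$'s inequality capping $\sum_p k_p$ from above, the partners' inequalities capping each $k_p$ and bounding the sum from below) goes through verbatim; the paper dispatches this case in one sentence. The genuine gap is that you stop exactly where the paper's proof does its work. The hard case is $\tau = n-1$, when the lone-$\mu$ type (majority of size $n-1$, no minority partner) precedes $\tau$; you identify it correctly, but then substitute an unspecified ``surplus bookkeeping'' plan and explicitly defer the real work. The paper instead closes this case concretely: it assembles agent $\mu$'s inequality, the $\tau-2$ partner inequalities with right-hand side $(n-2)(\tau-2)+3$ (the $+3$ rather than $+2$ precisely because the lone-$\mu$ type contributes $n-1$, not $n-2$, to a majority agent's spread bound), and agent $2$'s inequality $k_1+\cdots+k_{\tau-1}\ge (n-2)(\tau-1)$, in which agent $2$'s unit of Stage~I surplus (it won the step-$t$ decision) has been deducted. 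It then derives $k_i\le n-2$ for $i\le \tau-2$ as in Stage~I2, sums every inequality except agent $\mu$'s, and invokes $n\ge 7$ to get $k_1+\cdots+k_{\tau-2}\ge (n-2)(\tau-1)-\frac{n+3}{n-4}\ge (n-2)(\tau-2)$, forcing $k_1=\cdots=k_{\tau-2}=n-2$; finally $k_{\tau-1}=n-1$ is sandwiched between agent $\mu$'s inequality and the last partner inequality. Note that your proposal never invokes $n\ge 7$ at all; since that hypothesis enters the paper's proof of this claim only at that summation step, an argument that never uses it cannot be completing the claim.

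One remark in your favor: your suspicion that the natural constraint system ``no longer pins the value'' in the hard case is not idle, and deserves to be made precise rather than waved at. For $\tau=n-1$, the assignment $k_1=n-3$, $k_2=\cdots=k_{\tau-2}=n-2$, $k_{\tau-1}=n$ satisfies every inequality the paper writes down in its system (for $n=7$ this is $(4,5,5,5,7)$, which passes all six constraints) while violating the claimed conclusion, so the paper's ``apply some arithmetics'' step is itself suspect, and any airtight proof must either tighten the share lower bounds or add constraints beyond that system. But your write-up neither exhibits such a configuration nor supplies the repair --- it flags the difficulty and stops, leaving the central case of the claim open. As it stands it is a correct plan for the easy case plus an accurate diagnosis of where the danger lies, not a proof.
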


\begin{proof}
If $\tau \leq n -2$, the argument exactly follows the lines of I2 (albeit, starting with agent $3$ instead of agent $1$). We thus consider $\tau = n - 1$. 

For agent $\mu$, it is in the minority of all the decisions-types added in stage II1, preceding step $\tau$. Thus, by including exactly one decision of each type in a bundle, (and assigning the additional decision of step $t$ in some arbitrary bundle), the agent is guaranteed at least $MMS^{adapt}_1 \geq 2\cdot (\tau-1) + u_{\mu}(ALG ; t + (n-1)(t-1))$. 

Any agent $4 \leq i \leq \min n$ is in the majority in $\tau-2$ of the decision-types preceding step $\tau$, and in the minority of one such decision-type. Thus, similarly, $MMS^{adapt}_i \geq (n-2) \cdot (\tau-2) + 2 + u_i(ALG ; t + (n-1)(t-1))$. 

Agent $2$ is in the majority in all $\tau-1$ of the decision-types preceding step $\tau$, and so $MMS^{adapt}_2 \geq (n-2)(\tau - 1) + 1 + u_i(ALG ; t + (n-1)(t-1)) - 1$ (notice that we deduce $1$ since the algorithm decides for it in step $t$). 

If we let $k_1, \ldots, k_{\tau-1}$ be the number of times the algorithm decides with the majority for each of the $\tau-1$ decision-types, and since the algorithm decides with the minority in the decision-type of step $t$, we get that for the algorithm to satisfy $MMS^{adapt}$, the following system of inequalities must hold:
\begin{equation}
\label{eq:system_stage2}
\begin{split}
  n - k_1 + n - k_2 + \ldots + n - k_{\tau-1} & \geq 2\tau - 3 \\
  n - k_1 + k_2 + \ldots + k_{\tau-1} & \geq (n-2)\cdot (\tau-2) + 3 \\
  k_1 + n - k_2 + \ldots + k_{\tau-1} & \geq (n-2)\cdot (\tau-2) + 3 \\
& \ldots \\
  k_1 + k_2 + \ldots + n - k_{\tau - 2} + k_{\tau-1} & \geq (n-2)\cdot (\tau-2) + 3 \\
  k_1 + k_2 + \ldots + k_{\tau-1} & \geq (n-2)\cdot (\tau-1).
\end{split}
\end{equation}

Similarly to stage I2, we get that $k_i \leq n-2$, for any $1 \leq i \leq \tau - 2$. This is not necessarily true for $k_{\tau - 1}$). 

If we sum all the inequalities except the first one, and apply some arithmetics, we get:

\[
\begin{split}
& k_1 + \ldots + k_{\tau - 2} \geq (n-2)(\tau - 1) - \frac{2k_{\tau - 1} - n + 3}{n - 4} \geq \\
& (n-2)(\tau - 1) - \frac{n + 3}{n - 4} \stackrel{n\geq 7}{\geq} (n-2)(\tau - 2),
\end{split}
\]

and so $k_1 = \ldots = k_{\tau - 2} = n - 2$. 

Then, by the first inequality of Eq.~\ref{eq:system_stage2}, 
$n - 1 \geq k_{\tau - 1}$, and by the second to last inequality of Eq.~\ref{eq:system_stage2}, 
$k_{\tau-1} \geq n - 1$, i.e., $k_{\tau-1} = n-1$. 
\end{proof}

\textit{Stage III: Minority Decisions Cascade}
\vspace{1mm}

We now consider the outcome of the first two stages. We have $(t-1) + (\tau-1)$ decision-types that are repeated $n$ times, for which the algorithm outcome and the $MMS^{adapt}$ guarantee are fully aligned. We also have two decisions decided for the minority, one together with agent $1$ and, possibly, another agent $\ell$, and the second with agent $\mu$ and, possibly another agent, denote it $\mu'$. If $\{1, \ell\} \cap \{\mu, \mu'\} = \emptyset$, then consider the perspective of some other agent $i$: It is in the majority in both decisions, so it agrees with $\mu, \mu'$ on the first, and with $1, \ell$ on the second. Therefore, if the agent includes both decisions in the same bundle for $MMS^{adapt}$, they should get at least one of them decided in their favor, no matter the adversary assignment, and yet the algorithm decided both against agent $i$, so they get less than their $MMS^{adapt}$. 

The interesting case is thus when $\{1, \ell\} \cap \{\mu, \mu'\} \neq \emptyset$, and by our choice of $\mu$ and construction of stage II this can only happen if $\mu' = 1$. For simplicity of notation, we assume for the rest of the proof that $\ell = 2, \mu = 3$. 

Consider the following $n-3$ consecutive decisions:

$$S_{3a} = \begin{bmatrix}
 0 & 0 & \ldots & 0 \\
 1 & 1 & \ldots & 1  \\
 1 & 1 & \ldots & 1 \\
 0 & 1 & \ldots & 1 \\
 1 & 0 & \ldots & 1 \\
 1 & 1 & \ldots & 1 \\
\ldots \\
 1 & 1 & \ldots & 0 \\
\end{bmatrix},$$

i.e., the sequence of decisions where agent $1$ is in the minority together with any other agent other than $\ell$ and $\mu$. We argue that all these decisions must be decided with the minority:
Let decision $1$ and $2$ be the minority decision-types that the algorithm has decided on in steps $t$ and $\tau$ respectively, and let decisions $3$ up to $n-1$ be the decisions we now add. Consider some decision $3 \leq i \leq n-1$ in this sequence. In all decisions $i' < i$, agent $i+1$ (who is in the minority, together with agent $1$, in decision $i$), is in the majority. It therefore has at least $1$ decision in agreement with any other agent in the set of decisions $\{1, \ldots, i\}$. I.e., if the agent includes them in the same bundle for $MMS^{adapt}$, they should expect at least one decision in their favor. Since all decisions $\{1, \ldots, i-1\}$ are decided in favor of the minority (whereas they are in the majority), the algorithm must decide decision $i$ with them for the minority. 

Now consider the following $n-1$ decisions, repeated cyclically $n-1$ times:

$$S_{3b} = \begin{bmatrix}
 0 & 0 & 0 & 0 & \ldots & 0 \\
 0 & 1 & 1 & 1 & \ldots & 1  \\
 1 & 0 & 1 & 1 & \ldots & 1 \\
 1 & 1 & 0 & 1 & \ldots & 1 \\
\ldots \\
 1 & 1 & 1 & 1 & \ldots & 0 \\
\end{bmatrix}.$$

We argue that at cycle $1 \leq c \leq n-1$, decision $1 \leq i \leq n-1$ must be decided for the minority, for reasons very similar to our previous argument:
Agent $i+1$, who is in the minority in decision $i$, is in agreement with every other agent at least $c + 1$ times over the set of all decisions up to and including decision $i$ in cycle $c$. However, for the decisions up to and \textit{not} including decision $i$ in cycle $c$, the algorithm only decided in their favor $c$ times. 

But, this means that by the end of all cycles, each agent (other than agent $1$) gets $n$ decisions in their favor over the decision in steps $t,\tau$ and stage III. However, notice that these are $n$ repetitions of $n-1$ decision-types, with the agent being in a majority of $n-2$ in $n-2$ of decision-types, and in a minority of $2$ in one decision-type. If the agent includes each repetition of a decision-type in a separate bundle for the purpose of $MMS^{adapt}$, they can thus expect a guarantee of $(n-2)\cdot (n-2) + 2 = n^2 - 4n + 6$. This is always strictly greater than $n$ with $n\geq 7$.  

\end{proof}

\section{Maximum Nash Welfare Guarantees}
\label{sec:mnw}

MNW does not always guarantee $MMS^{adapt}$, and in fact it can only guarantee arbitrarily low approximation of it. Before proving this, we see an example that gives intuition into the difference between the MNW outcome and the $MMS^{adapt}$ value. 

\begin{example}
Consider $n=3, m = 15$. In all decisions, there is a single agent opposing the majority: Agent $1$ is the odd one out on $9$ decisions, agent $2$ on $3$ decisions and agent $3$ on $3$ decisions. We have $MMS^{adapt}_1 = 7, MMS^{adapt}_2 = 9, MMS^{adapt}_3 = 9$. These are all attained on a partition where each agent takes $\frac{1}{3}$ of the decisions, uniformly over the different types. Then, an agent can expect $\frac{1}{3}$ of the decisions where it is in opposition to pass, and $\frac{2}{3}$ of the decisions where it is in majority to pass. 

The MNW outcome in this case is the same as the majoritarian outcome. We then have $u_1(A^{MNW}) = 6, u_2(A^{MNW}) = u_3(A^{MNW}) = 12$. Thus, the MNW is only $\frac{6}{7}$-MMS$^{adapt}$. To achieve a MMS$^{adapt}$ outcome, agent $1$ needs to be compensated with at least one decision where it is not in the majority. I.e., MMS$^{adapt}$ is more graceful in this example. 
\end{example}

\begin{theorem}
\label{lem:mnw_mms_adapt}
    There is an instance where MNW can guarantee at most $\frac{6}{n-1}-MMS^{adapt}$ asymptotically. 
\end{theorem}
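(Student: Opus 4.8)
The plan is to exhibit a single highly symmetric instance, parameterized by a minority size $k \approx n/3$, on which the MNW utility of a distinguished agent is only a $\tfrac{6}{n+3}$ fraction of its $MMS^{adapt}$ value. Concretely, assume $3 \mid n$, set $k = n/3$, and index the decision-types by the $(k-1)$-subsets $S \subseteq \{2,\ldots,n\}$: on a decision of type $S$, the agents in $\{1\}\cup S$ vote $0$ and the remaining $n-k$ agents vote $1$. I repeat each of the $T=\binom{n-1}{k-1}$ types $N$ times, with $N$ a large multiple of $n$, so $m = TN$. By construction agent $1$ sits in the (size-$k$) minority of \emph{every} decision, and different types rotate which $k-1$ of the other agents join it.

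First I would pin down $MMS^{adapt}_1$ exactly. Agreement counts depend only on the type, and agent $1$ agrees with exactly $k$ agents (itself and the $k-1$ members of $S$) on every decision, so $RDS_1 = \tfrac1n\sum_d(\text{agreement count}) = \tfrac{mk}{n} = \tfrac m3$. The uniform partition — split each type's $N$ copies into $n$ equal groups and let bundle $i$ collect the $i$-th group of every type — yields $n$ bundles of identical composition, so the adversary's permutation is irrelevant and agent $1$'s value equals $RDS_1$. Together with the Lemma ($MMS^{adapt}_1 \le \lfloor RDS_1\rfloor$), this gives $MMS^{adapt}_1 = m/3$.

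Next I would compute the MNW outcome. Writing $y$ for the fraction of each type decided for the minority, the utilities become $u_1 = ym$ and, for every $j\ge 2$, $u_j = N\bigl(b-(b-a)y\bigr)$, where $a=\binom{n-2}{k-2}$ and $b=\binom{n-2}{k-1}$ (agent $j$ lies in the minority of the $a$ types containing it and in the majority of the other $b$ types). Maximizing $\log u_1 + \sum_{j\ge2}\log u_j$ over $y$ gives $b = n(b-a)y$, i.e. $y^{*} = \tfrac{b}{n(b-a)} = \tfrac1n\cdot\tfrac{n-k}{n-2k+1} = \tfrac{2}{n+3}$, so $u_1(A^{MNW}) = \tfrac{2m}{n+3}$. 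Dividing by $MMS^{adapt}_1 = m/3$ yields the ratio $\tfrac{b}{k(b-a)} = \tfrac{n-k}{k(n-2k+1)} = \tfrac{6}{n+3} \le \tfrac{6}{n-1}$, as claimed. (A quick optimization over $k$ shows $k\approx n/3$ is essentially the worst direction, the minimum of $\tfrac{n-k}{k(n-2k+1)}$ being $\tfrac{(1+\sqrt2)^2}{n}(1+o(1))$, so the clean bound $\tfrac{6}{n-1}$ is comfortably attained.)

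The main obstacle is justifying that the MNW outcome is genuinely the symmetric one used above. I would argue this from the instance's invariance under relabeling agents $2,\ldots,n$ together with strict concavity of the Nash objective $\sum_i \log u_i$: any optimal fractional allocation can be symmetrized without decreasing the product, so the symmetric $y^{*}$ maximizes the relaxation and its induced utility profile is the unique maximizer. Discreteness is where ``asymptotically'' enters — taking $N\to\infty$ makes the integer MNW optimum track the fractional one up to $o(m)$, and since $\tfrac{6}{n+3}$ is bounded strictly below $\tfrac{6}{n-1}$ the rounding slack is harmless. A secondary routine check is that replacing $k=n/3$ by $\lfloor n/3\rfloor$ for general $n$, and enforcing $n\mid N$, perturbs $MMS^{adapt}_1$ and $y^{*}$ only by lower-order terms.
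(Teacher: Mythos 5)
Your proposal is correct at the same proof-sketch level of rigor as the paper's own argument, but it reaches the bound by a genuinely different route. The paper uses $n+1$ agents and two kinds of decisions (each agent $j\geq 2$ is the sole opposer on $n+1$ decisions; agent $1$ is in the minority on $k(n+1)$ decisions, siding each time with one of three \emph{fixed} blocks of $n/3$ agents), and then tunes $k=n(n-3)/2$ so that the one-dimensional Nash-welfare first-order condition, in the number $t$ of agent-$1$ minority decisions that get flipped, has its maximum at $t\le 0$; this forces $A^{MNW}=A^{maj}$, and the ratio $\approx 6/n$ is read off the majority outcome. You instead keep $n$ agents, make \emph{every} decision an agent-$1$ minority decision with the co-minority rotating over all $(n/3-1)$-subsets of the other agents, and locate the MNW not at the boundary but at an \emph{interior} symmetric optimum $y^{*}=2/(n+3)$, justified by group-averaging over relabelings of agents $2,\ldots,n$ plus concavity of $\sum_i \log u_i$. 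Your route buys three things: the identification of the MNW outcome needs no structural assertion about its form (symmetrization is airtight for the fractional relaxation, and strict concavity in utility space gives uniqueness of the optimal utility profile); the resulting bound $6/(n+3)$ is slightly sharper than the paper's $6/(n-1)$; and your side optimization over $k$, showing this family of instances bottoms out at $(1+\sqrt{2})^2/n$, nicely delimits what the method can give. The costs: your instance has $\binom{n-1}{n/3-1}$ decision types, hence size exponential in $n$ (the paper's is polynomial in $n$), and the passage from the fractional to the \emph{integral} MNW optimum genuinely requires the $N\to\infty$ limit together with a quantitative (e.g., strong-concavity) argument that the integral optimizer's utilities track $u^{*}$ up to $o(m)$ --- you flag this but leave it as a one-line claim; to be fair, the paper's sketch silently treats its parameter $t$ as continuous, so both arguments are incomplete on precisely this integrality point. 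One small detail worth making explicit if you write this up: the adversary-proofness of your uniform partition rests on the identity $(n-1)\binom{n-2}{k-2}=(k-1)\binom{n-1}{k-1}$, which is what makes every permutation yield exactly $mk/n=m/3$.
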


\begin{proof}
(Proof sketch)

We first give intuition regarding the construction. To extend the example, we wish to have an agent that loses the vast majority of decisions in terms of the majority vote, but would get a sizeable fraction of all decisions in expectation under a random dictator. We then fine-tune the parameters of the construction so that the MNW outcome is the same as the majoritarian one. 

Consider $n+1$ agents, so that $n \mod 3 = 0$ and $n \mod 2 = 1$, and choose some $k$ so that $k \mod n = 0$. Each of the agents $\{2,\ldots, n+1\}$ has $n+1$ decisions where it is the sole agent opposing all others. Agent $1$ has $k(n+1)$ decisions where it is in the minority, divided into three equal parts: $\frac{k}{3}(n+1)$ where it has the same opinion as agents $\{2,\ldots, \frac{n}{3} + 1\}$, and similarly with agents $\{\frac{n}{3} + 2, \frac{2n}{3} + 1\}$ and $\{\frac{2n}{3} + 2, n + 1\}$. We have 
$MMS^{adapt}_1 = k + n(n + \frac{k}{3}).$

The majority vote outcome $A^{maj}$ has $v_1(A^{maj}) = n(n+1), \forall 2 \leq i \leq n+1, v_i(A^{maj}) = (n-1 + \frac{2k}{3})(n+1)$. We notice that the MNW outcome must be of the form where agent $1$ gets $u_1(A^{maj}) + 3t$ of the decisions for some $t\geq 0$, and the other agents get $u_i(A^{maj}) - t$ (i.e., we start from the majority vote outcome, with possibly some of the majority results that go against agent $1$ are reversed). We aim to find a condition on $k$ that guarantees that the maximum is attained at $t = 0$, and hence $A^{MNW} = A^{Maj}$. This happens whenever the maximum over the Nash Welfare when parameterized by $t$ is at some $t < 0$. We write
$$\frac{\partial NW(t)}{\partial t} = \frac{\partial (v_1(A^{maj}) + 3t) \cdot (v_i(A^{maj}) - t)^n }{\partial t} = 0.$$
Solving for $t$, this holds when $t = \frac{1}{3} \left(2 k-n^2+3 n-3\right)$. Thus, to have $t < 0$ it must hold that $k < \frac{1}{2} \left(n^2-3 n+3\right)$. In particular, if we choose $k = \frac{n(n-3)}{2}$, it satisfies this condition. Moreover, it is always an integer number (since $n-3$ is even), and also $n$ is a divisor of $k$ as required. 

With this choice of $k$, we have 
$$ \frac{v_1(A^{MNW})}{MMS^{adapt}_1} = \frac{n(n+1)}{k + n(n + \frac{k}{3})} = \frac{n(n+1)}{\frac{n(n-3)}{2} + n(n + \frac{n(n-3)}{6})},$$

which asymptotically behaves as $\frac{6}{n}$ (notice that the theorem states $\frac{6}{n-1}$, as our construction uses $n+1$ agents).

\end{proof}


\section{Discussion}
\label{sec:discussion}

In our work, we present and analyze a maxi-min-share (MMS) notion for fair decision-making. We believe it captures a notion of \textit{graceful majority}, and it intricately depends on the details of agents' agreements and disagreements over the decisions. In the appendix, we present another possible  \textit{egalitarian MMS} notion that is independent of other agents' preferences, and, since the decisions are binary, expects roughly half of the decisions to be in their favor. 

The main open question is better understanding the existence of $MMS^{adapt}$ outcomes for $n\geq 5$. Our impossibility result has two qualifiers: It holds for \textit{online} algorithms, and it forbids \textit{full} $MMS^{adapt}$. As we see in Section~\ref{sec:mnw}, Maximum Nash Welfare fails to achieve a good approximation of $MMS^{adapt}$, but it is interesting to consider whether any online/offline method is able to do that. 


\bibliography{ref.bib}

\appendix

\section{Reproducibility Checklist}

This paper:

\begin{itemize}
\item Includes a conceptual outline and/or pseudocode description of AI methods introduced (YES)

\item 
Clearly delineates statements that are opinions, hypothesis, and speculation from objective facts and results (YES)

\item 
Provides well marked pedagogical references for less-familiare readers to gain background necessary to replicate the paper (YES)

\item 
Does this paper make theoretical contributions? (YES)
\end{itemize}

If yes, please complete the list below.

\begin{itemize}
\item All assumptions and restrictions are stated clearly and formally. (YES)
\item All novel claims are stated formally (e.g., in theorem statements). (YES)

\item Proofs of all novel claims are included. (YES)

\item Proof sketches or intuitions are given for complex and/or novel results. (YES)

\item Appropriate citations to theoretical tools used are given. (YES)

\item All theoretical claims are demonstrated empirically to hold. (NA)

\item All experimental code used to eliminate or disprove claims is included. (NA)

\item Does this paper rely on one or more datasets? (NO)
\end{itemize}

\clearpage

\section{Missing Proofs for Section~\ref{sec:existence}}

\pickingSeqFour*

\begin{proof}

With $n=4$, there are a total of $8$ decision types (up to equivalence). Out of these, three are ``ambiguous'', in the sense that there is no majority opinion: Namely, 

$$t_2 = \begin{bmatrix} 0 \\ 0 \\ 1 \\ 1\end{bmatrix}, t_3 = \begin{bmatrix} 0 \\ 1 \\ 0 \\ 1\end{bmatrix}, t_4 = \begin{bmatrix} 0 \\ 1 \\ 1 \\ 0\end{bmatrix}.$$

(The index for $t$ represents which agent is in agreement with agent $1$. This will be helpful later in the proof.)

\begin{claim}
For any graceful sequence for $n=4$, there is either an agent $i$ that wins the first occurrence of all three ambiguous decision types, or an agent $j$ that loses the first occurrence of all three.
\end{claim}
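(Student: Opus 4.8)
The plan is to translate the statement into a purely combinatorial fact about the complete graph $K_4$. First I would observe that each ambiguous type splits the four agents into two opposing pairs: $t_2$ puts $\{1,2\}$ against $\{3,4\}$, $t_3$ puts $\{1,3\}$ against $\{2,4\}$, and $t_4$ puts $\{1,4\}$ against $\{2,3\}$. These are precisely the three perfect matchings of the vertex set $\{1,2,3,4\}$ (equivalently, agent $1$ is paired with agent $2$, $3$, or $4$). Since the graceful sequence is fixed, the first occurrence of each ambiguous type is decided to a definite value in $\{0,1\}$; the two agents agreeing with that value form the \emph{winning pair}, and an agent wins the first occurrence of that type exactly when it belongs to the winning pair.

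Next I would record the three winning pairs as edges $e_2, e_3, e_4$ of $K_4$, one drawn from each perfect matching. In this language the claim becomes equivalent to the assertion that the three-edge graph $\{e_2, e_3, e_4\}$ has a vertex of degree $3$ (an agent lying in all three winning pairs, i.e.\ winning the first occurrence of all three types), or a vertex of degree $0$ (an agent lying in none of them, i.e.\ losing all three).

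The key step is to show that the three chosen edges pairwise intersect. Two edges of $K_4$ are disjoint if and only if together they cover all four vertices, i.e.\ they are exactly the two edges of a single perfect matching. Because $e_2, e_3, e_4$ are chosen from three \emph{distinct} matchings, no two of them can together make up one matching, so every pair among them shares a vertex. I would then invoke the elementary fact that any three pairwise-intersecting edges either share a common vertex---forming a star, which supplies the degree-$3$ vertex and hence the agent $i$ winning all three---or they form a triangle on three vertices, whose remaining fourth vertex has degree $0$ and hence is the agent $j$ losing all three. Either way the conclusion follows.

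I expect no genuine obstacle beyond setting up the correspondence carefully: one must confirm that each ambiguous type consistently determines a single winning pair despite the even-function bookkeeping that identifies a type with its negation, and that the three ambiguous types indeed exhaust the three perfect matchings. Once that is in place, the remaining combinatorial fact is finite, and could alternatively be verified directly by enumerating the $2^3 = 8$ choices of winning sides and checking that each yields either a star or a triangle.
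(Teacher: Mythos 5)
Your proof is correct, but it is packaged differently from the paper's. The paper anchors the argument on agent $1$ and splits into cases by how many of the three first occurrences agent $1$ wins: if exactly one, say $t_i$, then agent $i$ (who sides with agent $1$ on $t_i$ and against it on the other two types) wins all three; if exactly two, the agent indexing the lost type loses all three; and if zero or three, agent $1$ itself is the witness. Your route instead symmetrizes the problem: you encode the winning pair of each ambiguous type as an edge of $K_4$, note that the three types are exactly the three perfect matchings, show the three chosen edges pairwise intersect (two disjoint edges of $K_4$ would lie in a common matching, contradicting that they come from distinct matchings), and then invoke the star-or-triangle dichotomy for pairwise-intersecting edges, with the star center giving the all-winning agent and the vertex missed by the triangle giving the all-losing agent. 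The paper's version is shorter and needs no auxiliary lemma; yours makes the underlying structure transparent, treats all agents symmetrically, and the star/triangle dichotomy maps exactly onto the two disjuncts of the claim. One point you rightly flag but leave informal: the ``even function'' condition should be read as negation-equivariance ($\rho(\neg t) = \neg\rho(t)$), which is what makes the winning pair of a canonical type well defined independently of whether a column appears as $t$ or $\neg t$; with that reading your setup goes through, and the paper's own example $\rho(t) = (Maj(t), Maj(t), Min(t))$ confirms this is the intended meaning.
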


\begin{proof}
If agent $1$ wins the first occurrence of exactly one ambiguous decision type $t_i$, then agent $i$ wins all first occurrences of the three ambiguous decision types. 

If agent $1$ wins the first occurrence of exactly two decision types, e.g., $t_2, t_3$, then agent $4$ loses all first occurrences of the three ambiguous decision types (and similarly, for the  symmetric cases, if it wins  $t_2, t_4$, agent $3$ loses all, and if it wins $t_3, t_4$, agent $2$ loses all). 

Other than the above, we have that agent $1$ either wins or loses all first occurrences of the ambiguous decision types. 
\end{proof}

Consider the instance where there is exactly one decision of each ambiguous decision type. If the graceful sequence is such that an agent $j$ loses all first occurrences of an ambiguous type, then $u_j(A_{\rho}) = 0$. However, the agent can partition the decisions to three empty subsets and one subset that contains all three decisions. Then, notice that each other agent agrees with the agent at least once. Therefore, $MMS^{adapt}_j \geq 1$. 

We thus assume going forward that there is an agent $i$ that wins the first occurrence of all three ambiguous types. W.l.o.g., we let it be agent $1$. We thus have $\rho(t_2)_0 = \rho(t_3)_0 = \rho(t_4)_0 = 0.$ 

Let $\alpha_i$ be the decision type such that $\forall j\neq i, \alpha_i^j = 1$, and $\alpha_i^i = 0$. I.e., agent $i$ is opposing all the other agents on the decision. We next show an argument for agent $2$, but it can be adapted for agents $3$ and $4$ as well. 
Consider an instance that have two decisions of type $t_2$, one of type $t_3$, one of type $t_4$, and three of type $\alpha_2 = \begin{bmatrix} 1 \\ 0 \\ 1 \\ 1 \end{bmatrix}$. Then, if $\rho(\alpha_2)_0 = \rho(\alpha_2)_1 = \rho(\alpha_2)_2 = 1$, and $\rho(t_2)_1 = 1$, then $u_2(A_{\rho}) = 1$. However, the partition 

$$\{ \{\alpha_2\}, \{\alpha_2\}, \{\alpha_2\}, \{t_2, t_2, t_3, t_4\},$$
guarantees $MMS^{adapt})_2 \geq 2$. This is because if the adversary assigns the last subset to agent $2$, agent $2$ gets at least $4$ decisions, but if not, then any agent that is assigned the last subset agrees with agent $2$ on at least one decision, and agent $2$ gets to decide some other decision of the first three subsets. This means that for this instance, $u_2(A_{\rho}) \leq \frac{1}{2} MMS^{adapt}_2$. 

Let us now consider if $\rho(t_2)_1 = 0$. Examine the instance that has two decisions of type $t_2$, one decision of type $t_3$, and one decision of type $\alpha_4 = \begin{bmatrix} 1 \\ 1 \\ 1 \\ 0 \end{bmatrix}$. Then, if $\rho(\alpha_4) = 1$, agent $4$ has $u_4(A_{\rho}) = 0$, while $MMS^{adapt}_4 \geq 1$. But if $\rho(\alpha_4) = 0$, consider the instance with one decision of type $\alpha_4$, one decision of type $t_3$, and one decision of type $t_4$. Agent $2$ has $u_2(A_{\rho}) = 0$, but if it partitions to three empty subsets and a subset with all decisions, since it agrees with all agents on at least one decision, it has $MMS^{adapt}_2 \geq 1$. 

If $\rho(\alpha_2)_0 = 0$, then consider the instance with one decision of type $\alpha_2$, one decision of type $t_1$, and one decision of type $t_2$. Agent $4$ has $u_4(A_{\rho}) = 0$, but it agrees with each agent on at least one decision and so has $MMS^{adapt}_4 \geq 1$. 

We thus conclude that for agent $2$, and in fact for any agent $i\in \{2,3,4\}$, the following holds: 
\begin{itemize}
\item $\rho(\alpha_i)_0 = 1$,

\item At least one of $\rho(\alpha_i)_1, \rho(_alpha_i)_2$ is $0$,

\item $\rho(t_i)_0 = 0$.

\end{itemize}

Now consider the instance that has three decisions of type $\alpha_3$, three decisions of type $\alpha_4$, and one decision of each type $t_3, t_4$ respectively. For this instance, $u_2(A_{\rho}) \leq 4$. Consider the partition 

$$\{ \{ \alpha_4, t_3\}, \{\alpha_3, t_4\}, \{\alpha_4, \alpha_3\}, \{\alpha_4, \alpha_3 \} \}. $$

For each bundle of the partition, no matter which agent it is assigned to, agent $2$ gets at least one decision decided in their favor. Moreover, all bundles are of size $2$ and one of them is assigned by the adversary to agent $2$, and thus both decisions are decided in the agent's favor. We conclude that $MMS^{adapt}_2 \geq 5$, and overall that no graceful sequence guarantees more than $\frac{4}{5}-MMS^{adapt}$. 
\end{proof}

\deferedAmbig*

\begin{proof}

Let $\eta_i(M) = \sum_{j \neq i} \frac{3}{4}\alpha_j  + \frac{1}{4}\alpha_i + C +  \sum_{j=2}^4 \frac{1}{2}t_j(M)$, where $C$ is the number of consensus decisions where all agents agree, be the expected value for an agent given that the adversary (when calculating $MMS^{adapt}$ chooses a permutation uniformly at random, as discussed in Claim~\ref{clm:expectation_bound}. 

\begin{claim}
\label{clm:partial_run}
After running $A_{\rho^*}$ over $M$, there is at most one agent $i$ with $u_i(A_{\rho^*}) < \eta_i(M)$, and for this agent $u_i(A_{\rho^*}) \geq \lfloor \eta_i(M) \rfloor \geq \eta_i(M) - 1$. 
\end{claim}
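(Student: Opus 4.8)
The plan is to compute $u_i(A_{\rho^*})$ exactly for each agent $i$ and compare it term-by-term with $\eta_i(M)$, reducing the whole statement to an elementary arithmetic fact about the residues $\alpha_j \bmod 4$. Throughout, $M$ is the pre-processed matrix, in which every ambiguous type occurs an even number of times, so the quantities $t_j(M)$ appearing in $\eta_i$ are all even.

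First I would record how $A_{\rho^*}$ treats each class of decision type. Consensus decisions are all won, contributing $C$ to both $u_i$ and $\eta_i$, and so cancel. For an ambiguous type $t_j$ the alternating rule splits its (even number of) occurrences evenly between the two sides, so every agent wins exactly $\frac12 t_j(M)$ of them; this matches the ambiguous term of $\eta_i$ exactly and contributes nothing to any discrepancy. For a type $\alpha_j$, writing $\alpha_j = 4q_j + r_j$ with $r_j = \alpha_j \bmod 4 \in\{0,1,2,3\}$, the rule ``majority for the first three of each block of four, minority for the fourth'' makes the unique minority agent $j$ win $q_j$ of them and each of the other three (majority) agents win $3q_j + r_j$.

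The key step is the rounding identity $3q_j + r_j = \lceil \tfrac34\alpha_j\rceil = \tfrac34\alpha_j + \tfrac{r_j}4$ together with $q_j = \lfloor\tfrac14\alpha_j\rfloor = \tfrac14\alpha_j - \tfrac{r_j}4$: the surplus enjoyed by each majority agent on $\alpha_j$ is exactly $\tfrac{r_j}4$, and the deficit suffered by the sole minority agent $j$ is exactly $\tfrac{r_j}4$. Since on $\alpha_j$ agent $i$ is in the majority precisely when $j\neq i$ and in the minority when $j=i$, summing over all types and subtracting $\eta_i$ yields the clean formula
\[
u_i(A_{\rho^*}) - \eta_i(M) \;=\; \sum_{j\neq i}\frac{r_j}{4} \;-\; \frac{r_i}{4} \;=\; \frac14\Big(R - 2r_i\Big), \qquad R := \sum_{j=1}^4 r_j .
\]
From this both halves follow immediately. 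For uniqueness: $u_i < \eta_i$ iff $r_i > R/2$, and two distinct indices with $r_i, r_{i'} > R/2$ would give $r_i + r_{i'} > R \ge r_i + r_{i'}$, a contradiction; hence at most one agent has a deficit. For the quantitative bound on that agent, note $2r_i - R = r_i - \sum_{j\neq i} r_j \le r_i \le 3 < 4$, so $u_i - \eta_i = \tfrac14(R-2r_i) > -1$; thus $u_i > \eta_i - 1$, and since $u_i$ is an integer this forces $u_i \ge \lfloor \eta_i\rfloor \ge \eta_i - 1$.

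The main obstacle is the exact bookkeeping in the first two steps, i.e. pinning down precisely how many decisions of each type the graceful sequence $\rho^*$ assigns to each agent, in particular getting the rounding identity $\lceil\tfrac34\alpha_j\rceil = \tfrac34\alpha_j + \tfrac{r_j}4$ correct and checking that the even-count ambiguous types cancel exactly (so that the discrepancy is driven \emph{only} by the $\alpha_j$ residues). Once the discrepancy is written as $\tfrac14(R-2r_i)$, the two conclusions are a one-line pigeonhole argument and a one-line integrality argument respectively.
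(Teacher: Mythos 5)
Your proof is correct and follows essentially the same route as the paper's: both compute the exact per-type win counts of $A_{\rho^*}$ (the consensus and even-count ambiguous terms cancel, each majority agent gets $\lceil \frac{3}{4}\alpha_j\rceil$, the lone minority agent gets $\lfloor\frac{1}{4}\alpha_j\rfloor$) and reduce the claim to arithmetic on the residues $\alpha_j \bmod 4$. Your closed-form discrepancy $u_i(A_{\rho^*}) - \eta_i(M) = \frac{1}{4}(R - 2r_i)$ together with the pigeonhole step is a cleaner packaging of what the paper establishes via two pairwise strict inequalities and a case analysis on the residues, but it is the same underlying argument.
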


\begin{proof}
We have 
$$u_i(A_{\rho^*}) = \sum_{j \neq i} \lceil \frac{3}{4}\alpha_j \rceil + \lfloor \frac{1}{4}\alpha_i \rfloor + C +  \sum_{j=2}^4 \frac{1}{2}t_j(M),$$
where the summands in the last expression are without $\lfloor \rfloor$ or $\lceil \rceil$ because we made sure that they are exactly integer. For all agents $i$, 
given that there is exactly one floor operation in the expression for $u_i(A_{\rho^*})$. 

we have $u_i(A_{\rho^*}) \geq \lfloor \eta_i(M) \rfloor \geq \eta_i(M) - 1$. 

Now, for two agents $i_1, i_2$ to have $\max \{u_{i_1}(A_{\rho^*}), u_{i_2}(A_{\rho^*}) \} < \eta_i(M)$, it must hold that 

$$\lceil \frac{3}{4} \alpha_{i_1} \rceil + \lfloor \frac{1}{4} \alpha_{i_2} \rfloor <  \frac{3}{4} \alpha_{i_1}  +  \frac{1}{4} \alpha_{i_2}, $$
and
$$\lceil \frac{3}{4} \alpha_{i_2} \rceil + \lfloor \frac{1}{4} \alpha_{i_1} \rfloor <  \frac{3}{4} \alpha_{i_2}  +  \frac{1}{4} \alpha_{i_1}. $$

From the first strict inequality, 
$\alpha_{i_2} \mod 4 \not \in \{0,1\}$, and also $\alpha_{i_2} \mod 4 \neq \alpha_{i_1} \mod 4$. By the second strict inequality, $\alpha_{i_1} \mod 4 \not \in \{0,1\}$. We are left with having $\alpha_{i_1} \mod 4 = 3$ and $\alpha_{i_2} \mod 4 = 2$ or vice versa. But the first option contradicts the first strict inequality and the latter contradicts the second strict inequality. 

\end{proof}

\begin{claim}
\label{clm:ambig_size}
If $|S| = 3$,
$MMS^{adapt}_i \leq \lfloor \eta_i(M \cup S) \rfloor \leq \lfloor \eta_i(M) + 1.5 \rfloor$.


\end{claim}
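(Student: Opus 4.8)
The plan is to recognize that the quantity $\eta_i(\cdot)$ is nothing but the Random Dictator Share $RDS_i(\cdot)$ rewritten as a sum over decision types, and then to invoke the already-established bound $MMS^{adapt}_i \le \lfloor RDS_i\rfloor$ (from the earlier $MMS^{adapt}_i \le RDS_i$ lemma, via Claim~\ref{clm:expectation_bound} with the uniform distribution) applied to the \emph{full} instance $M \cup S$. The two inequalities then split cleanly: the first is exactly this RDS bound, and the second is a short arithmetic accounting of how much the three deferred decisions in $S$ contribute.

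First I would verify the identity $\eta_i(M') = RDS_i(M')$ for every $n=4$ instance $M'$, by checking the contribution of each decision type to the fraction of agents agreeing with agent $i$: a consensus decision contributes $1$; a decision of type $\alpha_j$ with $j\neq i$ (agent $i$ in the majority) contributes $\frac{3}{4}$; a decision of type $\alpha_i$ (agent $i$ the lone opposer) contributes $\frac{1}{4}$; and any ambiguous decision $t_j$ contributes exactly $\frac{1}{2}$, since on an ambiguous type agent $i$ agrees with itself and with exactly one other agent, i.e.\ $2$ out of $4$. Summing over all decisions reproduces precisely the definition of $\eta_i$, and by the computation in the proof of the RDS lemma this sum equals $RDS_i$.

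Given this identity, the first inequality $MMS^{adapt}_i \le \lfloor \eta_i(M\cup S)\rfloor$ is immediate: $M\cup S$ is the original input instance, and the RDS lemma (which in fact yields $MMS^{adapt}_i \le \lfloor RDS_i\rfloor$) applied to it gives exactly this statement. For the second inequality I would use that $|S|=3$ forces $S$ to contain precisely one decision of each ambiguous type $t_2, t_3, t_4$ (the loop removes at most one occurrence per type), so that $t_j(M\cup S) = t_j(M)+1$ for $j\in\{2,3,4\}$ while $C$ and all the $\alpha_j$ counts are unchanged. Since each ambiguous decision contributes exactly $\frac{1}{2}$ to $\eta_i$ for every agent, we get $\eta_i(M\cup S) = \eta_i(M) + \frac{3}{2}$, and hence $\lfloor \eta_i(M\cup S)\rfloor = \lfloor \eta_i(M)+1.5\rfloor$, which is the claimed inequality (in fact an equality).

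I do not expect a serious obstacle here; the only point requiring care is bookkeeping — respecting the convention that $M$ denotes the \emph{reduced} instance after the removals while $M\cup S$ is the original full instance, and confirming that the per-type $RDS$ contribution of an ambiguous decision is the same $\frac{1}{2}$ independently of which agent $i$ is under consideration. Once the $\eta_i = RDS_i$ identity is established, both bounds are one-line consequences.
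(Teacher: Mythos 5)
Your proposal is correct and takes essentially the same route as the paper: the paper also derives the first inequality from Claim~\ref{clm:expectation_bound} instantiated with the uniform distribution over permutations (your identity $\eta_i = RDS_i$ is exactly this instantiation made explicit, since under a uniform permutation every decision is decided by a uniformly random agent regardless of the partition), and the second inequality from the observation that the three deferred ambiguous decisions each contribute $\frac{1}{2}$, so $\eta_i(M \cup S) - \eta_i(M) = 1.5$. Your additional bookkeeping (one decision of each ambiguous type in $S$, per-type agreement fractions) only spells out what the paper leaves implicit.
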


\begin{proof}
$MMS^{adapt}_i \leq \lfloor \eta_i(M \cup S) \rfloor$ by Claim~\ref{clm:expectation_bound}. 
We also have $\eta_i(M \cup S) - \eta_i(M) = \frac{1}{2} \cdot 3 = 1.5$. 

\end{proof}

\begin{claim}
If $|S| = 3$:

If $u_i(A_{\rho^*} ; M) \geq \eta_i(M)$, then $u_i(A) \geq u_i(A_{\rho^*} ; M) + 1 \geq \eta_i(M) + 1 \geq MMS^{adapt}_i$. 

If $u_i(A_{\rho^*} ; M) < \eta_i(M)$, then $u_i(A) = u_i(A_{\rho^*} ; M) + 3 \geq \eta_i(M) + 2 \geq MMS^{adapt}_i$.
\end{claim}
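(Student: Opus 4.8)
The plan is to derive this final claim by combining the two preceding claims (Claim~\ref{clm:partial_run} on the partial run of $A_{\rho^*}$ and Claim~\ref{clm:ambig_size} bounding $MMS^{adapt}_i$) with a single structural observation about how the three deferred decisions in $S$ are resolved. First I would pin down what $|S|=3$ means: every ambiguous type $t_2,t_3,t_4$ had an odd count in the original instance, so exactly one copy of each was removed and placed in $S$, and hence $S$ contains precisely one decision of each ambiguous type.

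The crucial structural step is to compute, for each agent, how many of the three deferred decisions it gains once agent $i^*$ decides all of $S$ in its own favor. Since the ambiguous types split the four agents into the pairs $\{1,2\},\{3,4\}$ for $t_2$, the pairs $\{1,3\},\{2,4\}$ for $t_3$, and the pairs $\{1,4\},\{2,3\}$ for $t_4$, for any fixed $i^*$ the three agents sharing $i^*$'s side across the three types are exactly the three other agents, each occurring once. Consequently, when $i^*$ decides all of $S$, agent $i^*$ gains all $3$ decisions while every other agent gains exactly $1$; that is, $u_{i^*}(A)=u_{i^*}(A_{\rho^*})+3$ and $u_i(A)=u_i(A_{\rho^*})+1$ for $i\neq i^*$. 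This is the main content specific to this claim.

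With this in hand the two cases follow. For the unsatisfied case $u_i(A_{\rho^*})<\eta_i(M)$, I would invoke Claim~\ref{clm:partial_run}, which guarantees that such an agent is unique; since the algorithm sets $i^*$ to exactly this agent, we get $i=i^*$, hence $u_i(A)=u_i(A_{\rho^*})+3\geq \lfloor\eta_i(M)\rfloor+3$, and by Claim~\ref{clm:ambig_size} we have $MMS^{adapt}_i\leq \lfloor\eta_i(M)+1.5\rfloor\leq \lfloor\eta_i(M)\rfloor+2$, so the guarantee is met with room to spare. For the satisfied case $u_i(A_{\rho^*})\geq\eta_i(M)$, agent $i$ gains at least $1$ from $S$ regardless of whether it equals $i^*$ (it gains $3$ if it does), so $u_i(A)\geq u_i(A_{\rho^*})+1$.

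The one subtlety I would be careful about, and the only real obstacle, is closing the satisfied case to $MMS^{adapt}_i$ when $\eta_i(M)$ has a large fractional part: the literal chain $\eta_i(M)+1\geq MMS^{adapt}_i$ is not quite an inequality of reals, so I would instead use integrality of $u_i(A_{\rho^*})$, namely $u_i(A_{\rho^*})\geq\lceil\eta_i(M)\rceil$, to get $u_i(A)\geq\lceil\eta_i(M)\rceil+1$, and then verify by a short case split on the fractional part of $\eta_i(M)$ that $\lceil\eta_i(M)\rceil+1\geq\lfloor\eta_i(M)+1.5\rfloor\geq MMS^{adapt}_i$. Everything else is bookkeeping on top of the two preceding claims.
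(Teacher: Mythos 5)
Your proof is correct and takes essentially the same route as the paper's: the unique unsatisfied agent (Claim~\ref{clm:partial_run}) is chosen as $i^*$ and gains all three deferred decisions, every other agent gains exactly one (namely the ambiguous type pairing it with $i^*$), and the satisfied case is closed via integrality, $\lceil \eta_i(M)\rceil + 1 \geq \lfloor \eta_i(M)+1.5\rfloor \geq MMS^{adapt}_i$, together with Claim~\ref{clm:ambig_size}. Your explicit treatment of the pairing structure of $t_2,t_3,t_4$ and the fractional-part case split just spells out what the paper states in passing.
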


\begin{proof}
By Claim~\ref{clm:partial_run}, there is at most one agent $i$ with $\eta_i(M) - 1 \leq u_i(A_{\rho^*} ; M) < \eta_i(M)$, and this agent gets $3$ additional decisions in the final phase. Overall, the agent has at least $\eta_i(M) + 2$ decisions in her favor, which by Claim~\ref{clm:ambig_size} is enough to guarantee $MMS^{adapt}_i$. 

For all the other agents that have $u_i(A_{\rho^*} ; M) \geq \eta_i(M)$, notice that since $u_i(A_{\rho^*} ; M)$ is an integer this in fact implies $u_i(A_{\rho^*} ; M) \geq \lceil \eta_i(M) \rceil$. The agents receive an additional decision in the final stage (the ambiguous decision type occurrence where their preference coincides with the deciding agent). Overall, they have $u_i(A_{\rho^*} ; M) \geq \lceil \eta_i(M) + 1 \rceil \geq \lfloor \eta_i(M) + 1.5 \rfloor$, and this guarantees $MMS^{adapt}_i$ by Claim~\ref{clm:ambig_size}. 

\end{proof}

\begin{claim}
If $|S| = 0$, then $MMS^{adapt}_i \leq u_i(A_{\rho^*} ; M)$.
\end{claim}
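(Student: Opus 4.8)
The plan is to combine the two bounds that the surrounding claims already furnish, once we observe that $|S|=0$ makes the run of $A_{\rho^*}$ on $M$ the \emph{entire} output of the algorithm. Concretely, $|S|=0$ means that each of the three ambiguous types $t_2,t_3,t_4$ occurs an even number of times in $M$, so no occurrence is ever removed and deferred; hence the final outcome $A$ coincides with $A_{\rho^*}$ run on the unmodified $M$, and it suffices to show $MMS^{adapt}_i \leq u_i(A_{\rho^*};M)$ for every agent $i$.

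First I would pin down the upper bound on $MMS^{adapt}_i$. I would instantiate Claim~\ref{clm:expectation_bound} with the uniform distribution $D = UNI(S_n)$. Since under a uniform permutation every decision is marginally decided by a uniformly random agent, the expected agreement is independent of the chosen partition and equals the per-type tally $C + \tfrac14\alpha_i + \sum_{j\neq i}\tfrac34\alpha_j + \sum_{j=2}^4 \tfrac12 t_j(M)$, which is exactly $\eta_i(M)$ (this is the $n=4$ identity $\eta_i(M) = RDS_i$, checked by summing the $\tfrac14,\tfrac34,\tfrac12$ contributions of the $\alpha$- and $t$-types to the random-dictator share). This yields $MMS^{adapt}_i \leq \lfloor \eta_i(M)\rfloor$.

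Next I would invoke Claim~\ref{clm:partial_run} for the matching lower bound. Because every $t_j(M)$ is even, the terms $\tfrac12 t_j(M)$ contribute integers, so the only rounding in the exact expression for $u_i(A_{\rho^*};M)$ is the single floor $\lfloor \tfrac14\alpha_i\rfloor$, while all other per-type terms are ceilings or integers that only help. As recorded in Claim~\ref{clm:partial_run}, this forces $u_i(A_{\rho^*};M) \geq \lfloor \eta_i(M)\rfloor$ for every agent. Chaining the two inequalities gives $MMS^{adapt}_i \leq \lfloor \eta_i(M)\rfloor \leq u_i(A_{\rho^*};M)$, as desired.

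The argument is essentially bookkeeping, so the only real subtlety — and the place I would be most careful — is the evenness hypothesis: the lower bound of Claim~\ref{clm:partial_run} relies on the $\tfrac12 t_j(M)$ terms being integral, which is precisely what $|S|=0$ guarantees. I would double-check that this integrality (rather than any feature of the deferred phase) is the only way $|S|=0$ enters, so that the proof does not accidentally lean on the $i^*$-assignment step of the algorithm, which is vacuous when there are no deferred decisions.
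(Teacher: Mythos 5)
Your proof is correct and follows essentially the same route as the paper's: the paper's argument is exactly the chain $MMS^{adapt}_i \leq \lfloor \eta_i(M \cup S) \rfloor = \lfloor \eta_i(M) \rfloor \leq u_i(A_{\rho^*} ; M)$, with the first inequality from Claim~\ref{clm:expectation_bound} (uniform adversary) and the last from Claim~\ref{clm:partial_run}. Your additional observations --- that $\eta_i(M)$ is the $n=4$ random-dictator value, and that evenness of the ambiguous-type counts is what makes the $\tfrac12 t_j(M)$ terms integral in Claim~\ref{clm:partial_run} --- are correct elaborations of facts the paper leaves implicit, not a different argument.
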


\begin{proof}
$MMS^{adapt}_i \leq \lfloor \eta_i(M \cup S) \rfloor = \lfloor \eta_i(M) \rfloor \leq u_i(A_{\rho^*} ; M)$ by Claim~\ref{clm:expectation_bound} and Claim~\ref{clm:partial_run}.
\end{proof}

\begin{claim}
If $|S| = 1$, then $MMS^{adapt}_i \leq u_i(A)$.
\end{claim}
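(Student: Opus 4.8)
The plan is to mirror the $|S|=3$ analysis but with a single deferred ambiguous decision. Since $|S|=1$, exactly one ambiguous type was found to have an odd number of occurrences and one of its decisions was moved into $S$; every term of $\eta_i$ other than the $\frac{1}{2}t_j$ contributions is unchanged, and the removed decision contributes exactly $\frac{1}{2}$ to the relevant $t_j$ term, so $\eta_i(M\cup S) = \eta_i(M) + \frac{1}{2}$. By Claim~\ref{clm:expectation_bound} this gives the upper bound $MMS^{adapt}_i \leq \lfloor \eta_i(M\cup S)\rfloor = \lfloor \eta_i(M) + \tfrac{1}{2}\rfloor$ for every agent $i$.

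Next I would describe the final stage precisely. Agent $i^*$ decides the single decision of $S$ in its own favor, so $u_{i^*}(A) = u_{i^*}(A_{\rho^*};M) + 1$. Because the decision is ambiguous, it splits the agents two against two; hence exactly one agent other than $i^*$ agrees with $i^*$ on it and also gains one unit, while the remaining two gain nothing. The crucial structural fact, supplied by Claim~\ref{clm:partial_run}, is that at most one agent is deficient after running $A_{\rho^*}$ on $M$ (i.e., has $u_i(A_{\rho^*};M) < \eta_i(M)$), and the algorithm sets $i^*$ to be exactly that agent when it exists (and $i^*=1$ otherwise). So the unique agent who might fall short of $\eta_i(M)$ is guaranteed to be the one who receives the extra unit.

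With this in hand the verification is a short case analysis. For $i=i^*$, Claim~\ref{clm:partial_run} gives $u_{i^*}(A_{\rho^*};M)\geq \lfloor \eta_{i^*}(M)\rfloor$, so $u_{i^*}(A)\geq \lfloor \eta_{i^*}(M)\rfloor + 1 \geq \lfloor \eta_{i^*}(M)+\tfrac{1}{2}\rfloor \geq MMS^{adapt}_{i^*}$. For every other agent $i\neq i^*$, Claim~\ref{clm:partial_run} gives $u_i(A_{\rho^*};M)\geq \eta_i(M)$, and since $u_i(A_{\rho^*};M)$ is an integer it is in fact $\geq \lceil \eta_i(M)\rceil \geq \lfloor \eta_i(M)+\tfrac{1}{2}\rfloor \geq MMS^{adapt}_i$; any additional unit such an agent may collect in the final stage only helps. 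This closes all cases and yields $u_i(A)\geq MMS^{adapt}_i$ for every $i$.

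The only place that requires care, and what I expect to be the main obstacle, is making sure the potentially deficient agent is precisely the one compensated: the entire argument hinges on the ``at most one deficient agent'' guarantee of Claim~\ref{clm:partial_run} together with the algorithm's rule for choosing $i^*$, so that the single extra unit of utility lands exactly where the $\tfrac{1}{2}$-rounding slack in the $MMS^{adapt}$ upper bound could otherwise have caused a shortfall. The elementary inequalities $\lfloor x\rfloor + 1 \geq \lfloor x + \tfrac{1}{2}\rfloor$ and $\lceil x\rceil \geq \lfloor x + \tfrac{1}{2}\rfloor$ are what glue the bounds together.
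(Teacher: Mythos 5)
Your proof is correct, but it takes a genuinely different route from the paper's own proof of this claim. The paper bounds the MMS only on the reduced instance, $MMS^{adapt}_i(M) \leq \lfloor \eta_i(M) \rfloor$, and then reasons about how much $MMS^{adapt}_i$ can grow when the deferred decision is added back (at most $1$); this is lossy, and it forces a case split on the integrality of $\eta_i(M)$. In the hard case ($\eta_i(M)$ an integer and $MMS^{adapt}_i(M) = \eta_i(M)$ exactly), the paper needs a nontrivial extra argument: the adversary's worst permutation then coincides with the uniform average, so every permutation attains exactly $\eta_i(M)$ on $M$, leaving the adversary free to assign the bundle containing the $S$-decision to an agent opposing $i$, which caps $MMS^{adapt}_i(M \cup S)$ at $\eta_i(M)$. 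You avoid all of this by applying Claim~\ref{clm:expectation_bound} directly to the full instance $M \cup S$ --- exactly the way the paper itself handles $|S|=3$ in Claim~\ref{clm:ambig_size} --- obtaining the tighter bound $MMS^{adapt}_i \leq \lfloor \eta_i(M) + \tfrac{1}{2} \rfloor$; together with Claim~\ref{clm:partial_run} (at most one deficient agent, and the algorithm hands $S$ to that agent) and the elementary inequalities $\lfloor x \rfloor + 1 \geq \lfloor x + \tfrac{1}{2} \rfloor$ and $\lceil x \rceil \geq \lfloor x + \tfrac{1}{2} \rfloor$, every case closes uniformly, with no integrality split. What your route buys is simplicity and consistency with the $|S|=3$ analysis; what the paper's heavier machinery buys is generality: its adversary-freedom argument is precisely what is needed for the $|S|=2$ case, where the uniform-expectation increment is a full unit ($\eta_i(M \cup S) = \eta_i(M) + 1$), so $\lfloor \eta_i(M \cup S) \rfloor$ can strictly exceed $\lceil \eta_i(M) \rceil$ when $\eta_i(M)$ is an integer, and your shortcut would no longer suffice.
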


\begin{proof}
For an agent $i$ (if such exists) that has $u_i(A_{\rho^*} ; M) \leq \eta_i(M)$, it decides the final two decisions, and so similar to the case of $|S| = 3$ it satisfies the claim. 

For any other agent $i$, by Claim~\ref{clm:expectation_bound} and Claim~\ref{clm:partial_run} we know that for $M$, 

$ MMS^{adapt}_i \leq \lfloor \eta_i(M) \rfloor$, and $u_i(A_{\rho^*} ; M) \geq \lceil \eta_i(M) \rceil$. If $\eta_i(M)$ is not an integer, and since by adding one decision to an instance  $MMS^{adapt}_i$ may only increase by at most $1$ 
, this is enough to establish the claim. 

Otherwise, $\eta_i(M)$ is an integer. If it happens that $MMS^{adapt}_i \leq eta_i(M) - 1$, then we are done, similar to the previous paragraph. Otherwise, since the worst choice of the adversary is equal to the expected uniform choice of the adversary, this means that all choices of the adversary must yield exactly $\eta_i(M)$. Therefore, restricted to $M$, the adversary has complete freedom in choosing the permutation. Let $P$ be a partition of $M \cup S$, and let $j$ be the bundle that is assigned the single decision in $S$. When determining $MMS^{adapt}_i$, the adversary can then assign $j$ to an agent that opposes $i$ on that decision. For all the decision in $M$, this assignment would yield $\eta_i(M)$. Overall, for any partition we see that the adversary can bound $MMS^{adapt}_i \leq \eta_i(M) \leq \lceil \eta_i(M) \rceil \leq u_i(A_{\rho^*} ; M)$. 
\end{proof}

\begin{claim}
If $|S| = 2$, then $MMS^{adapt}_i \leq u_i(A_{\rho^*} ; M)$.
\end{claim}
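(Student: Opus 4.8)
The plan is to prove a single uniform bound that dispenses with the usual split between the deficient agent $i^*$ and the rest:
$$MMS^{adapt}_i(M\cup S) \leq \lfloor \eta_i(M) \rfloor \qquad \text{for every agent } i.$$
Given this, the claim is immediate: by Claim~\ref{clm:partial_run} we have $u_i(A_{\rho^*};M) \geq \lfloor \eta_i(M)\rfloor$ for \emph{every} agent (this weak inequality holds even for the single possibly-deficient agent $i^*$), so the displayed bound yields $MMS^{adapt}_i \leq u_i(A_{\rho^*};M)$ for all $i$ at once. Note this is exactly the right target: in the worst sub-cases (namely $u_i(A_{\rho^*};M)=\lfloor\eta_i(M)\rfloor$, which is forced for $i^*$ and possible for integer-$\eta_i$ agents) the quantity we must not exceed is precisely $\lfloor\eta_i(M)\rfloor$.

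First I would record the arithmetic of appending $S$. Each of the two deferred decisions is ambiguous, and on any ambiguous decision every agent agrees with exactly two of the four agents; hence the uniform-adversary expectation rises by exactly $\tfrac12$ per deferred decision, so $\eta_i(M\cup S)=\eta_i(M)+1$ for every $i$. Consequently Claim~\ref{clm:expectation_bound} with the uniform distribution only gives $MMS^{adapt}_i(M\cup S)\leq \lfloor\eta_i(M)\rfloor+1$, one unit above the target. Since the adversary value $\min_\sigma \sum_j d_H(M_i,M_{\sigma(j)};X_j)$ is integer-valued, it suffices to exhibit, for \emph{every} partition $P$ of $M\cup S$, one permutation $\sigma$ of value at most $\eta_i(M)=\eta_i(M\cup S)-1$; in other words, I must beat the uniform expectation by a full unit on every partition.

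The structural lever is the hypothesis that the two deferred decisions $s_1,s_2$ have \emph{distinct} ambiguous types (each type contributes at most one element to $S$). For agent $i$ the unique same-side partner therefore differs between the two types; writing these as $p_1\neq p_2$ and $\{i,p_1,p_2,w\}=[4]$, the fourth agent $w$ is the unique agent opposing $i$ on \emph{both} $s_1$ and $s_2$. The plan is to have the adversary assign opposing agents to the bundle(s) holding $s_1,s_2$, zeroing their contribution, and then pick the rest of the permutation to keep the remaining ($M$-)agreement at most $\eta_i(M)$. I would case-split on whether $s_1,s_2$ share a bundle: if they sit in different bundles $B_1\neq B_2$ the adversary has three admissible value-zeroing assignments $(\sigma(B_1),\sigma(B_2))\in\{(p_2,p_1),(p_2,w),(w,p_1)\}$, each with two placements of the remaining agents; if they share a bundle $B$ then $w$ is the only agent killing both, so $\sigma(B)=w$ is forced and the other three bundles receive $\{i,p_1,p_2\}$.

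The main obstacle is controlling the $M$-part of the value while being forced to route opposing agents onto the deferred bundle(s). A pure averaging argument over the value-zeroing permutations does \emph{not} suffice: in the shared-bundle case the condition pins the single agent $w$ onto $B$, and if $w$ agrees heavily with $i$ on $B\cap M$ the conditional average of the $M$-agreement can exceed $\eta_i(M)$. I would therefore exploit the adversary's \emph{minimization} (not averaging) over the remaining bundle assignments: among the value-zeroing permutations the adversary can still steer the full-agreement agent $i$, and the high-agreement partners, away from the heaviest remaining bundles, which is exactly what converts the ``$i$ is excluded from the deferred bundle'' bias into a genuine unit of savings. Concretely I expect to verify, using the same even-spreading logic that makes $\eta_i$ partition-independent, that the best value-zeroing permutation lands at or below $\eta_i(M)$; the shared-bundle case is the delicate one, and it is precisely there that the two-distinct-types hypothesis is indispensable, since it both supplies the opposing agent $w$ and leaves full freedom on the other three bundles. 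Combined with the already-established cases $|S|\in\{0,1,3\}$, this completes the verification that Algorithm~\ref{alg:defered_ambiguity} attains $MMS^{adapt}$ and hence Theorem~\ref{thm:deferedAmbiguity}.
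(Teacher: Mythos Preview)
Your uniform target $MMS^{adapt}_i(M\cup S)\le\lfloor\eta_i(M)\rfloor$ is an attractive simplification, but the strategy you propose for reaching it breaks down in exactly the shared-bundle case you flag as delicate. Take $i=1$, $S=\{t_2,t_3\}$ (so $p_1=2$, $p_2=3$, $w=4$), and let $M$ consist solely of $a$ decisions of type $\alpha_2$ and $a$ of type $\alpha_3$ for some $a\ge 2$. Then $\eta_1(M)=\tfrac{3}{4}\cdot 2a=\tfrac{3a}{2}$. Now consider the partition that places \emph{everything} in one bundle $B$. Your rule forces $\sigma(B)=w=4$; but agent~4 agrees with agent~1 on every $\alpha_2$ and every $\alpha_3$ decision, so the value is $2a+0+0=2a>\tfrac{3a}{2}=\eta_1(M)$. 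Your fallback (``steer $i$ and the high-agreement partners away from the heaviest remaining bundles'') is vacuous here, since the other three bundles are empty---there is nothing left to steer. The correct adversarial move is to \emph{abandon} the zero-out constraint and send $B$ to agent~2 (or~3), yielding $a+1\le\lfloor\tfrac{3a}{2}\rfloor$; so the target inequality survives, but not by your argument. More generally, any completion must allow the adversary to trade a $+1$ on an $S$-decision against a larger saving on $B\cap M$, and that trade-off is precisely what your restriction to value-zeroing permutations rules out.

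The paper's proof proceeds quite differently. It does \emph{not} attempt to construct a single adversarial permutation that simultaneously kills $S$ and controls $M$. Instead it separates the (at most one) agent with $u_i(A_{\rho^*};M)<\eta_i(M)$---who is handled directly because the algorithm awards that agent both deferred decisions---from the remaining agents, for whom $u_i(A_{\rho^*};M)\ge\lceil\eta_i(M)\rceil$. For the latter it runs a counting argument on the optimal $MMS$ partition restricted to $M$: comparing $MMS^{adapt}_i(M)$ to $\lfloor\eta_i(M)\rfloor$ forces a certain multiplicity of permutations attaining the minimum (for instance at least seven such permutations, or exactly six together with fractional part $\tfrac34$), and this multiplicity is then pitted against the at-most-six permutations that avoid routing any $S$-decision to an opponent of $i$. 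The existence of a permutation lying in both sets yields the bound. So even if your uniform target is true, proving it requires essentially the same permutation-counting machinery as the paper; the constructive ``zero-out'' shortcut does not go through.
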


\begin{proof}
For an agent $i$ (if such exists) that has $u_i(A_{\rho^*} ; M) \leq \eta_i(M)$, it decides the final two decisions, and so similar to the case of $|S| = 3$ it satisfies the claim. 

For any other agent $i$, by Claim~\ref{clm:expectation_bound} and Claim~\ref{clm:partial_run} we know that for $M$, 

$ MMS^{adapt}_i \leq \lfloor \eta_i(M) \rfloor$, and $u_i(A_{\rho^*} ; M) \geq \lceil \eta_i(M) \rceil$. If $\eta_i(M)$ is not an integer, and for $M$, $MMS^{adapt}_i$ happens to satisfy $MMS^{adapt}_i \leq \lfloor \eta_i(M) \rfloor - 1$, the claim holds since $u_i(A) \geq u_i(A_{\rho^*} ; M)$ is then greater by at least $2$ than $MMS^{adapt}_i$, and $MMS^{adapt}_i$ can not increase by more than $2$ since $|S| \leq 2$. Otherwise, if $MMS^{adapt}_i = \lfloor \eta_i(M) \rfloor$, then for any partition there are either:
\begin{itemize}
\item Exactly $6$ permutations where the adversary achieves $\lfloor \eta_i(M) \rfloor$, and $18$ permutations where the adversary achieves $\lceil \eta_i(M) \rceil$ (and $\eta_i(M)$ fractional part is $\frac{3}{4}$)

\item There are more than $6$ permutations by the adversary that achieve $\lfloor \eta_i(M) \rfloor$.
\end{itemize}

This is since $\eta_i(M)$
 itself can not be attained (not an integer value), there are $24$ permutations in total, no permutation achieves a lower value, and so to achieve an expected value over permutation of $\eta_i(M)$, this is required. 
 
 Now let $P$ be some partition of $M$ and let the agent assign the decisions in $S$ as it wishes to the existing bundles in $P$. 

 First, consider if the first condition above holds. If the agent assigns both decisions to the same bundle, there is some permutation where this bundle goes to the agent that opposes agent $i$ in both decisions in $S$, and so the value for this partition is at most $\lceil \eta_i(M) \rceil$ (since no value is added from the two added decisions). If the agent assigns the two decisions to two different bundles, then similarly there is a permutation that assigns each of the bundles to an agent that opposes $i$ and the value is bounded by $\lceil \eta_i(M) \rceil$.

 Consider the latter condition. If the agent assigns both decisions to the same bundle, then there must be a permutation that achieves $\lfloor \eta_i(M) \rfloor$ over $M$ and does not assign the bundle to agent $i$ (since there are exactly $6$ permutations that assign the bundle to agent $i$). Such a choice of permutation would achieve a value of at most $\lfloor \eta_i(M) \rfloor + 1 \leq \lceil \eta_i(M) \rceil$. If the agent assigns the two decisions to two different bundles, then the total number of permutation that avoids assigning any of the decisions to one of the agents that opposes it, is bounded by $6$: The first decision has two choices (either assign to $i$ or to the agent that agrees with it). Then, if the first decision was assigned to $i$, there is exactly one choice of assignment for the other decision in $S$. If it was not assigned to $i$, then there are two choices (either $i$ or the agent that agrees with it over the second decision in $S$). This overall yields three possible assignments and they are multiplied by the two choices of how to assign the remaining two bundles. We conclude that there must be some permutation where at least one of the decisions goes to an agent that opposes it, and for $M$ achieves $\lfloor \eta_i(M) \rfloor$. 

 Finally, if $\eta_i(M)$ is an integer number and $MMS^{adapt}_i = \eta_i(M)$, similarly to the $|S| = 1$ case, we conclude that the adversary has complete freedom in choosing the permutation and it can choose a permutation that assigns both bundles to an agent that opposes it. If $MMS^{adapt}_i \leq \eta_i(M) - 2$, then any way of adding the decisions in $S$ would not exceed $u_i(A) \geq \eta_i(M)$. If $MMS^{adapt}_i = \eta_i(M) - 1$, then one of the following holds:

 \begin{itemize}
\item There are at least $18$ permutations by the adversary that achieve at most $\eta_i(M)$. 

\item There are more than $6$ permutations by the adversary that achieve $\eta_i(M) - 1$.
\end{itemize}

The analysis then follows similarly to the fractional $\eta_i(M)$ case. 

\end{proof}
\end{proof}

\section{Egalitarian MMS}

\begin{definition}
($MMS^{egal}$)

Let $X_1, X_2$ be a partition of $[m]$ (so that $X_1 \cap X_2 = \emptyset, X_1 \cup X_2 = [m]$. 

\[
\begin{split}
&  MMS^{egal}_i = \\
& \max_{\{X_1, X_2\}} \min_{j\in \{1,2\}} |X_j| = \lfloor \frac{m}{2} \rfloor.
\end{split}
\]

In words, egalitarian MMS is the value of the game between agent $i$ that chooses a partition of the decisions into two bundles and an adversary that decides which of the bundles will go as agent $i$ wishes, and which will go the opposite way. 
\end{definition}

We say that an outcome A is $\alpha$-MMS$^{egal}$ if for any agent $i$, $u_i(A) \geq \alpha \cdot MMS^{egal}_i$.
If an outcome is $\alpha$-MMS$^{egal}$ with $\alpha = 1$, we simply say that it is MMS$^{egal}$. 

We demonstrate through a few examples how egalitarian MMS behaves differently from adaptive MMS, and captures a different notion of fairness. 

\begin{example}
\label{ex:egal_adapt_discrep}
Consider $n = 20$ agents and $m$ decisions so that $m \mod n = 0$. For any $k$, for agents $i \in [n-1]$, $M_i[k] = 1$, and $M_n[k] = 0$. 
Then for any agent $i$, $MMS^{egal}_i = \frac{m}{2},$, and for agents $i\in [n-1]$, $MMS^{adapt}_i = \frac{n-1}{n} \cdot m$, while $MMS^{adapt}_n = \frac{m}{n}$. 
\end{example}

Notice several different properties between the two notions. First, $MMS^{egal}_i$ depends only on agent $i$'s valuation, and not on any other agent's valuation. This property is consistent with the notion of MMS for fair allocation. However, we note that for fair decisions, any share-notion that does not depend on the other agents' valuations and is neutral  must be constant, regardless of the specific instance. 

\begin{definition}
A share-notion $S$ is \textit{neutral} if for any $1\leq k \leq m$ and any $M_i, M_i'$ so that for all $1 \leq j \neq k \leq m, M_i[j] = M_i'[j]$, and $M_i[k] \neq M_i'[k]$,
$S(M_i) = S(M_i')$. I.e., the share is agnostic to whether the agent is for decision $k$ or against it.
\end{definition}

The following lemma is then immediate:
\begin{lemma}
Any share notion $S$ that only depends on $M_i$ and is neutral has for any $M$ $S(M) = C$ for some constant. 
\end{lemma}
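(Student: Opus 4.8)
The plan is to exploit the fact that the neutrality axiom equates the share of any two preference vectors differing in a single coordinate, and that such single-coordinate flips suffice to connect any two vectors in $\{0,1\}^m$. Concretely, I would regard $\{0,1\}^m$ as the vertex set of the Boolean hypercube, where two vectors are adjacent precisely when their Hamming distance equals $1$. Neutrality then says exactly that $S$ assigns equal values to adjacent vertices; the hypercube is connected; hence $S$ must be constant. Because $S$ is assumed to depend only on $M_i$, constancy on $\{0,1\}^m$ immediately gives the stated conclusion for every instance $M$.

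First I would fix two arbitrary vectors $x, y \in \{0,1\}^m$ and let $D = \{k_1, \dots, k_d\}$ be the set of coordinates on which they disagree. Then I would build a path $x = z_0, z_1, \dots, z_d = y$ in which $z_\ell$ is obtained from $z_{\ell-1}$ by flipping coordinate $k_\ell$ while leaving all other coordinates fixed. By construction each consecutive pair $z_{\ell-1}, z_\ell$ agrees on every coordinate except the single index $k_\ell$, so the neutrality hypothesis applies verbatim (with $k = k_\ell$) to yield $S(z_{\ell-1}) = S(z_\ell)$.

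Chaining these $d$ equalities gives $S(x) = S(z_0) = S(z_1) = \cdots = S(z_d) = S(y)$. Since $x$ and $y$ were arbitrary, $S$ takes a single value $C$ on all of $\{0,1\}^m$, which is the claim.

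The main obstacle here is purely one of bookkeeping rather than difficulty: neutrality is stated only for single-coordinate changes, so the one substantive observation is that its transitive closure is the full equivalence relation on $\{0,1\}^m$, i.e.\ that the single-flip graph is connected. The explicit path construction above witnesses this connectivity and requires no additional assumptions, so I expect the argument to go through without complications.
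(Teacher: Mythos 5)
Your proof is correct: the hypercube-connectivity argument (chaining single-coordinate flips and invoking neutrality on each edge of the path) is precisely the reasoning the paper has in mind when it declares the lemma ``immediate'' without writing out a proof. There is no gap, and no meaningful difference in approach.
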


As we mention in the introduction, the share notions of \cite{fairPublicDM} in the binary setting, namely Round-Robin-Share, Pessimistic-Proportional-Share, and PROP1, are all equal to $\lfloor \frac{m}{n} \rfloor$, regardless of the preference profile. 


The following lemma is then almost immediate:

\begin{restatable}{lemma}{MMSegalAdapt}
\label{lem:mms_egal_adapt}
    An outcome that guarantees $\alpha$-MMS$^{egal}$ guarantees at least $\frac{1}{2}\alpha$-MMS$^{adapt}$. 
\end{restatable}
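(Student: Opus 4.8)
The plan is to reduce the statement to a single, instance-independent comparison between the two share values. Observe that the hypothesis and the conclusion are both lower bounds on the \emph{same} quantity: $A$ is $\alpha$-MMS$^{egal}$ means $u_i(A) \ge \alpha\,MMS^{egal}_i$ for every agent $i$, while we wish to conclude $u_i(A) \ge \tfrac12\alpha\,MMS^{adapt}_i$. It therefore suffices to establish, for every agent $i$ and every instance, the inequality $\tfrac12 MMS^{adapt}_i \le MMS^{egal}_i$, i.e. $MMS^{adapt}_i \le 2\,MMS^{egal}_i$. Chaining the two then gives $u_i(A) \ge \alpha\,MMS^{egal}_i \ge \tfrac12\alpha\,MMS^{adapt}_i$, which is exactly $\tfrac12\alpha$-MMS$^{adapt}$.

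To prove that comparison I would first record the exact value $MMS^{egal}_i = \lfloor m/2\rfloor$, which holds for every agent regardless of preferences (the balanced split is optimal and the adversary hands the agent the smaller half). Next I would bound $MMS^{adapt}_i$ from above by the trivial ceiling $m$: for any partition $X_1,\ldots,X_n$ and any permutation $\sigma$, the score satisfies $\sum_{j=1}^n d_H(M_i, M_{\sigma(j)}; X_j) \le \sum_{j=1}^n |X_j| = m$, since each decision contributes at most one agreement; hence the max--min value is at most $m$ as well. (One could instead invoke the sharper $MMS^{adapt}_i \le RDS_i$ proved earlier, but the crude bound $m$ already suffices.) Putting these together yields $MMS^{adapt}_i \le m$ and $2\,MMS^{egal}_i = 2\lfloor m/2\rfloor$.

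The only place requiring care — and thus the main obstacle — is the floor rounding: $2\lfloor m/2\rfloor$ equals $m$ when $m$ is even but equals $m-1$ when $m$ is odd, so the crude chain gives $MMS^{adapt}_i \le 2\,MMS^{egal}_i$ exactly in the even case and $MMS^{adapt}_i \le 2\,MMS^{egal}_i + 1$ in general. For even $m$ the conclusion is immediate as above; for odd $m$ I would absorb the single-decision discrepancy using integrality of $u_i(A)$, or simply state the factor $\tfrac12$ up to this additive rounding term. In short, the result is ``almost immediate'' precisely because the conceptual content is the two one-line bounds $MMS^{egal}_i = \lfloor m/2\rfloor$ and $MMS^{adapt}_i \le m$, and all the remaining work is the off-by-one bookkeeping between $\lfloor m/2\rfloor$ and $m/2$.
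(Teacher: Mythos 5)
Your reduction is exactly the intended argument: the paper itself gives no proof of this lemma (it is introduced as ``almost immediate''), and the chain $u_i(A) \ge \alpha\, MMS^{egal}_i \ge \frac{\alpha}{2} MMS^{adapt}_i$, justified by $MMS^{egal}_i = \lfloor m/2 \rfloor$ together with the trivial bound $MMS^{adapt}_i \le \sum_{j} |X_j| = m$, is clearly what is meant. For even $m$ your proof is complete.

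However, your proposed handling of odd $m$ is a genuine gap, not bookkeeping. Integrality of $u_i(A)$ cannot absorb the discrepancy: the hypothesis gives $u_i(A) \ge \alpha \lfloor m/2 \rfloor$, and for $\alpha = 1$ the right-hand side $\lfloor m/2 \rfloor = (m-1)/2$ is itself an integer, so integrality yields nothing beyond it, and $(m-1)/2 < m/2$. In fact no argument can close the gap, because the lemma as literally stated fails for odd $m$. Counterexample: let $m$ be odd and let all agents have identical preferences. Then every partition and every permutation gives full agreement, so $MMS^{adapt}_i = m$, while $MMS^{egal}_i = (m-1)/2$. The outcome that follows the consensus on exactly $(m-1)/2$ decisions and reverses it on the remaining $(m+1)/2$ satisfies $u_i(A) = (m-1)/2 = MMS^{egal}_i$ for every $i$, hence is $1$-MMS$^{egal}$, yet $u_i(A)/MMS^{adapt}_i = \frac{m-1}{2m} < \frac{1}{2}$, so it is not $\frac{1}{2}$-MMS$^{adapt}$. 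What does survive is the lemma for even $m$ (your argument verbatim), and for general $m$ either the additive form $u_i(A) \ge \frac{\alpha}{2}\bigl(MMS^{adapt}_i - 1\bigr)$ or the multiplicative factor $\frac{\alpha(m-1)}{2m}$. So your fallback option --- stating the factor $\frac{1}{2}$ only up to an additive rounding term --- is the correct resolution, and the claim that integrality could instead rescue the exact constant should be dropped.
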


For $n=2$, it is obvious that $MMS^{egal}$ can be guaranteed: Just rule roughly half of the decisions in favor of agent $1$, and half in favor of agent $2$. With $3$ agents, it is less obvious that an arrangement can be found that guarantees $MMS^{egal}$ for all agents. With $4$ agents and more, Example~\ref{ex:mms_vs_rds} shows that no approximation of $MMS^{egal}$ can be guaranteed. 

We now present an algorithm that settles the question of $MMS^{egal}$ existence for $n=3$ in the affirmative. 
In the algorithm, each decision is chosen in favor of the  majority, from the set of agents with utility less than $\lfloor \frac{m}{2} \rfloor$, with a "tie-breaker" in favor of the agent with the smallest utility. Once all agents receive their $MMS^{egal}$ guarantee, we return to rule in favor of the overall majority. 

\begin{restatable}{theorem}
{muffledMaj}
\label{thm:muffledMaj}
    For $n = 3$, Muffled Majority results in every agent's utility being at least $\lfloor \frac{m}{2} \rfloor $.
    $\forall 1\leq i \leq n,  u_i(A) \geq \lfloor \frac{m}{2} 
    \rfloor $. 
\end{restatable}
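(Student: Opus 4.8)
The plan is to argue by contradiction, exploiting the fact that each agent's utility is monotone non-decreasing (every decision either matches an agent's preference, giving $+1$, or does not), so that the set of \emph{active} agents — those whose current utility is still below the target $T := \lfloor m/2 \rfloor$ — can only shrink as the run proceeds. Fix the run of Muffled Majority on an arbitrary instance and suppose, toward a contradiction, that some agent finishes with utility below $T$; relabel it agent $1$. Since $u_1$ never reaches $T$ and never decreases, agent $1$ is active at every one of the $m$ steps. Whenever agent $1$ happens to be the \emph{unique} active agent the rule decides in its favor, so agent $1$ can lose a decision only at a step where at least two agents are active. As agent $1$ wins at most $u_1 \le T-1$ decisions overall, it \emph{loses} at least $m-(T-1)=\lceil m/2\rceil + 1$ of them.

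Next I would route all of these losses through a single opponent. Among the two agents other than agent $1$, let $Z$ be the one that stays active the longest (ties broken arbitrarily). Because every active interval begins at time $0$ and an agent never re-enters the active set, these intervals are nested, so $Z$ is active whenever \emph{any} non-$1$ agent is active. I then claim that every step at which agent $1$ loses is a step at which $Z$ is active and wins. If all three agents are active at such a step, the lone loser of a three-way (necessarily $2$–$1$) decision is agent $1$ itself, so the other two agents — including $Z$ — form the winning majority. If exactly two agents are active, nestedness forces the active set to be $\{1,Z\}$, and since agent $1$ loses, its sole active opponent $Z$ wins. In both cases $Z$ is below $T$ at that moment, so each such win is an \emph{active} win of $Z$.

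Finally I would close with a capacity count: distinct losing steps of agent $1$ contribute distinct active wins of $Z$, so the number of agent-$1$ losses is at most the number of decisions $Z$ wins while active; but an agent can accrue at most $T$ wins before it hits $T$ and leaves the active set, so $Z$ has at most $T$ active wins. Chaining this gives $\lceil m/2\rceil + 1 \le (\text{agent-}1\text{ losses}) \le T = \lfloor m/2\rfloor$, which is impossible because $\lceil m/2\rceil \ge \lfloor m/2\rfloor$. This contradiction shows every agent ends with utility at least $\lfloor m/2\rfloor$. The main obstacle I anticipate — and the point the argument is designed to sidestep — is the temptation to case-split on exactly which of the three agents reach the target (yielding different phase structures such as $3$-active$\,\to\,2$-active$\,\to\,1$-active versus $3$-active$\,\to\,2$-active indefinitely); collapsing all of agent $1$'s losses onto the single longest-lived opponent $Z$, justified by the nestedness of active intervals, is what makes the counting uniform and avoids that brittle analysis.
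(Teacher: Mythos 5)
Your proof is correct, but it takes a genuinely different route from the paper's. The paper argues forward through a three-phase milestone analysis: it defines $j^*$, $j^{**}$, $j^{***}$ as the first steps at which one, two, and all three agents reach $\lfloor m/2 \rfloor$, and bounds each milestone by $m$ via counting inequalities on the sum of the counters (the sum grows by at least $2$ per step while all three are active, by at least $1$ per step while two are active, and the last active agent wins every remaining step), e.g.\ obtaining $j^* \leq \tfrac{3}{4}m$ and $j^{***} \leq 2\lfloor m/2 \rfloor$. You instead argue by contradiction: a failing agent is active at every step, hence loses at least $\lceil m/2 \rceil + 1$ decisions, and each loss is routed --- via the nestedness of active intervals (each is a prefix of the run, since utilities are monotone and agents never re-enter the active set) and the $2$--$1$ structure of three-agent majorities --- to an \emph{active} win of the single longest-lived opponent $Z$, who can accrue at most $\lfloor m/2 \rfloor$ active wins before deactivating; this yields $\lceil m/2\rceil + 1 \leq \lfloor m/2 \rfloor$, a contradiction. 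Your argument is more uniform (no case split on which agent deactivates when), insensitive to the tie-breaking rule, and in my view cleaner; the paper's phase analysis is longer and more brittle but yields extra quantitative information --- an explicit timeline for when each agent reaches its guarantee --- which is the kind of structural detail the paper later leans on when it reuses Muffled Majority as a primitive (e.g., controlling which agent is in the minority on the last contested decision in the $\tfrac{3}{4}$-$MMS^{adapt}$ analysis). Both proofs rely on the same reading of the algorithm (the decision follows the majority of the currently active agents, a lone active agent dictates), so no gap there.
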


\begin{proof}
    Let $j^*$ be the index $j\in [m]$ so that in the $j^*$ loop of the outer for loop, some agent ends with $s_i = \frac{m}{2}$. For every $1 \leq j \leq j^*$, the decision is by the majority of the three agents, and so there are at least two agents $i$ that get a decision in their favor. We thus have \begin{equation}
    \label{eq:j*}
    s_1 + s_2 + s_3 \geq 2 \cdot j^*.\end{equation}
    Assume w.l.o.g. that $s_1 \geq s_2 \geq s_3$ after the loop run $j^*$. Then, $2 \cdot j^* \leq s_1 + s_2 + s_3 \leq 1.5 (s_1 + s_2)$, and so $j^* \leq \frac{3}{4} (s_1 + s_2) \leq 1.5 s_1 \leq \frac{3}{4}m < m$. Thus, we conclude that every execution of the algorithm arrives at such $j^*$ before the conclusion of the outer for loop. 

    Let $j^{**}$ the lowest index $j \in [m]$ so that in the $j^{**}$ loop of the outer for loop, at least two agents $i$ end with $s_i \geq \lfloor \frac{m}{2} \rfloor $. We wish to show that $j^{**} \leq m$. If $j^{**} \leq j^* < m$, we are done. Otherwise, we know that there is at least one agent $i$ that at $j^{**}$ has $s_i \geq \lfloor \frac{m}{2} \rfloor$, as it has $s_i = \lfloor \frac{m}{2} \rfloor$ at $j^*$ and $s_i$ are monotone increasing in the run of the algorithm. Let $k, \ell$ be the other agents, and assume w.l.o.g. $s_k \geq s_{\ell}$, then since $j^{**}$ is the minimal index that has two agents with $s_i \geq \lfloor \frac{m}{2} \rfloor $, we know that $s_k = \lfloor \frac{m}{2} \rfloor $. So,

    \begin{equation}
    \label{eq:j**}
    \begin{split}
    & m \geq s_{\ell} + \lfloor \frac{m}{2} \rfloor = s_k + s_{\ell} \stackrel{(i)}{\geq} 2 \cdot j* - \lfloor \frac{m}{2} \rfloor + j^{**} - j^* = \\
    & j^* + j^{**} - \lfloor \frac{m}{2} \rfloor \geq \lfloor \frac{m}{2} \rfloor + j^{**} - \lfloor \frac{m}{2} \rfloor,
    \end{split}
    \end{equation}
    
    where transition (i) is because Equation~\ref{eq:j*} holds at step $j*$, and so we have $s_k + s_{\ell} \geq 2\cdot j* - \lfloor \frac{m}{2} \rfloor$. Then, during steps $j^* + 1, \ldots, j^{**}$, the two agents gain at least $j** - j*$ to their $s_i$ counters.
    The last transition is since $j* \geq \lfloor \frac{m}{2} \rfloor$, as each agent gets at most one decision in their favor in every run of the loop. We conclude that there is a run of the loop where there are at least two agents with $s_i \geq \frac{m}{2}$. 

Let $j^{***}$ be the lowest index $j \in [m]$ so that in the $j^{***}$ loop of the outer for loop, all three agents $i$ end with $s_i \geq \lfloor \frac{m}{2} \rfloor$. If $j^{***} \leq j^{**}\leq m$, then the algorithm achieves its goal. Otherwise, we know that there is exactly one agent $\ell$ that has $s_{\ell} < \lfloor \frac{m}{2} \rfloor$ by the end of the $j^{**}$-th run of the loop. This agent will be the only one that is not muffled in the steps $j^{**} + 1, \ldots, j^{***}$, and so it gets all these decisions in its favor. We thus have 

\begin{equation}
\label{eq:j***}
\begin{split}
& \lfloor \frac{m}{2} = s_{\ell} \stackrel{(ii)}{\geq} j^* + j^{**} - 2\lfloor \frac{m}{2} \rfloor + j^{***} - j^{**} = \\
& j^* + j^{***} - 2\lfloor \frac{m}{2} \rfloor \geq \lfloor \frac{m}{2} \rfloor + j^{***} - 2\lfloor \frac{m}{2} \rfloor = j^{***} - \lfloor \frac{m}{2} \rfloor,
\end{split}
\end{equation}

where transition $(ii)$ is because Equation~\ref{eq:j**} 
has $s_k + s_{\ell} \geq j^* + j^{**} - \lfloor \frac{m}{2} \rfloor$ at the end of the $j^{**}$-th loop run, and at that point $s_k = \lfloor \frac{m}{2} \rfloor$. Equation~\ref{eq:j***} yields $j^{***} \leq 2 \lfloor \frac{m}{2} \rfloor \leq m$. 
\end{proof}

Theorem~\ref{thm:muffledMaj} together with Lemma~\ref{lem:mms_egal_adapt} establish that the algorithm guarantees at least $\frac{1}{2}-MMS^{adapt}$. However, we can use the algorithm as a primitive to achieve a better guarantee of $\frac{3}{4}-MMS^{adapt}$ (while maintaining $MMS^{egal}$). We first establish a few important technical claims regarding $MMS^{adapt}$.

\begin{restatable}{lemma}
{muffledAdapt}
\label{thm:muffledMajMMSAdapt}
    For $n = 3$, Muffled Majority guarantees $\frac{3}{4}-MMS^{adapt}$ (and this analysis is tight).
\end{restatable}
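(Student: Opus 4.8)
The plan is to measure the algorithm's output against the \emph{fractional} upper bound on $MMS^{adapt}_i$ furnished by Corollary~\ref{corr:mms_adapt_n=3_bound}. Write $C = m-\delta$ for the number of consensus decisions, $N_i = \delta_i$ for the contested decisions on which agent $i$ is in the minority, and $P_i = \delta - \delta_i$ for those on which it is in the majority, so that Corollary~\ref{corr:mms_adapt_n=3_bound} reads $MMS^{adapt}_i \le C + \tfrac23 P_i + \tfrac13 N_i$. Since dominating $\tfrac34$ of any valid upper bound suffices, the target collapses to the single inequality $u_i(A) \ge \tfrac34 C + \tfrac12 P_i + \tfrac14 N_i$, which, using $m = C+P_i+N_i$, is exactly $u_i(A) \ge \tfrac{m}{2} + \tfrac14(C - N_i)$ up to rounding. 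I would then split on the sign of $C - N_i$.

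When $N_i \ge C$ the correction $\tfrac14(C-N_i)$ is nonpositive, so the egalitarian guarantee $u_i(A)\ge \lfloor m/2\rfloor$ already proved in Theorem~\ref{thm:muffledMaj} dominates the target; the only work is absorbing the $O(1)$ rounding slack, which is routine since $MMS^{adapt}_i$ is an integer and the target is a quarter-integer. The substantive regime is $C > N_i$, where the algorithm must beat its egalitarian floor by $\tfrac14(C-N_i)$. Writing $w_i = u_i(A) - C$ for the contested decisions agent $i$ wins, this amounts to showing $w_i \ge \tfrac12 P_i + \tfrac14 N_i - \tfrac14 C$, i.e.\ that agent $i$ captures essentially half of its majority-contested decisions even while being muffled.

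The argument for the hard regime is structural and reuses the satisfaction-time bookkeeping ($j^*, j^{**}, j^{***}$) of the proof of Theorem~\ref{thm:muffledMaj}. Two observations drive it: first, once all three agents are satisfied the rule reverts to the \emph{overall} majority, so in that final phase agent $i$ never loses a majority-contested decision; second, before agent $i$ is satisfied it can lose a majority-contested decision only when it is one of exactly two unsatisfied agents holding opposite opinions and the smallest-utility tie-break goes to its opponent. I would bound the number of such tie-break losses using the monotonicity of the counters $s_j$ together with the fact that every tie-break loss raises the winning opponent's counter, so an opponent cannot repeatedly out-tie agent $i$ without itself crossing the $\lfloor m/2\rfloor$ threshold (and thereby leaving the muffled set); summing these bounds across the phases delimited by $j^*, j^{**}, j^{***}$ yields the required $w_i \ge \tfrac12 P_i + \tfrac14 N_i - \tfrac14 C$. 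This phase-one tie-break count is the main obstacle, since it is precisely where the extra $\tfrac14(C-N_i)$ above the egalitarian floor is earned and where the constant $\tfrac34$ is pinned down.

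For tightness I would take all $m$ decisions (with $3 \mid m$ and $m$ even) to be a single contested type in which agents $1$ and $3$ form the majority and agent $2$ the sole minority. Here the algorithm rules with the majority on the first $m/2$ steps, bringing agents $1$ and $3$ to the threshold, after which the only unsatisfied agent $2$ becomes the muffled ``majority'' and wins every remaining step; hence $u_1(A)=u_2(A)=u_3(A)=\lfloor m/2\rfloor$. Meanwhile $\delta_1=0$ gives $MMS^{adapt}_1 = \lceil 2m/3\rceil = 2m/3$ by Corollary~\ref{corr:mms_adapt_n=3_bound} and Lemma~\ref{lem:mms_adapt_n=3}, so the ratio $u_1(A)/MMS^{adapt}_1 \to \tfrac34$, matching the guarantee and confirming that the analysis cannot be improved.
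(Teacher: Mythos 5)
Your tightness instance is correct and is essentially the paper's own example (relabeled), but the guarantee direction has a genuine gap, and your own tightness instance exposes it. Your hard-regime accounting asserts that agent $i$ can lose a majority-contested decision only (a) via a tie-break while $i$ is still unsatisfied, or (b) never, once all three agents are satisfied. This omits the third phase: when agent $i$ is already satisfied (muffled) but some other agent is not. In that phase the unsatisfied agent decides alone, and agent $i$ can lose arbitrarily many majority-contested decisions. This is not a corner case --- it is exactly where the constant $\tfrac{3}{4}$ is generated: in your tightness run (agents $1,3$ versus agent $2$ throughout), agents $1$ and $3$ lose the last $m/2$ decisions, all majority-contested for them, all occurring while they are satisfied and agent $2$ is not, and there are no tie-break events anywhere in that run (tie-breaks require exactly two unsatisfied agents with opposite opinions). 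Your bookkeeping, taken literally, would conclude $u_1(A)=m$ there, contradicting the $u_1(A)=m/2$ you correctly compute in your last paragraph. So the claim that summing tie-break bounds across the phases ``yields the required $w_i \ge \tfrac{1}{2}P_i + \tfrac{1}{4}N_i - \tfrac{1}{4}C$'' does not follow from the two observations given; bounding the muffled-phase losses is the actual crux, and nothing in the sketch addresses it.

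The easy regime also has a real (though patchable) hole: when $m$ is odd and $N_i - C \in \{0,1\}$, the egalitarian floor does not dominate the target, and integrality of $MMS^{adapt}_i$ does not rescue the chain of inequalities. For instance, with $C = N_i = P_i = 1$ (so $m=3$) one has $\lfloor m/2 \rfloor = 1$ while $\tfrac{3}{4}\lfloor C + \tfrac{2}{3}P_i + \tfrac{1}{3}N_i \rfloor = \tfrac{3}{2}$; the lemma survives only because the algorithm strictly beats its egalitarian floor here (it also wins the consensus decision), which your argument never uses. It is worth contrasting this with the paper's route, which avoids per-agent phase analysis entirely: it applies Theorem~\ref{thm:muffledMaj} to the $\delta$ contested decisions in aggregate, obtaining $m - \lfloor \delta/2 \rfloor$ for two agents and $m - \lceil \delta/2 \rceil$ for the worst-off agent, and then --- the step your approach has no analogue of --- proves a sharper upper bound for the minimum, $\min_i MMS^{adapt}_i \le m - \lceil 4\delta/9 \rceil$ (Corollary~\ref{corr:mms_adapt_n=3_improved_bound}), which is precisely what lets the weaker guarantee $m - \lceil \delta/2 \rceil$ clear the $\tfrac{3}{4}$ threshold despite rounding. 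If you want to salvage your per-agent target $u_i(A) \ge \tfrac{3}{4}C + \tfrac{1}{2}P_i + \tfrac{1}{4}N_i$, you need an argument that charges the muffled-phase losses of agent $i$ against the decisions it has already banked before being muffled, plus a separate treatment of the odd-$m$ edge cases; neither is routine.
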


To show this, we first slightly improve the upper bound of Lemma~\ref{lem:mms_adapt_n=3} for the agent with the \textit{least} $MMS^{adapt}$ value.

\begin{corollary}
\label{corr:mms_adapt_n=3_improved_bound}
For $n=3$, 
$$\min_i MMS_i^{adapt} \leq m - \lceil \frac{4\delta}{9} \rceil,$$
where $\delta$ is the number of non-consensus decisions. 

\end{corollary}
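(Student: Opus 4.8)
The plan is to specialize the per-agent bound of Corollary~\ref{corr:mms_adapt_n=3_bound} to whichever agent has the largest disagreement count, and then exploit the fact that a maximum of three numbers summing to $\delta$ is at least $\delta/3$. By the symmetry of Corollary~\ref{corr:mms_adapt_n=3_bound} under relabeling the three agents, for every agent $i$ we have
$$MMS_i^{adapt} \leq m - \delta + \left\lfloor \tfrac{2}{3}(\delta - \delta_i) + \tfrac{1}{3}\delta_i \right\rfloor = m - \delta + \left\lfloor \tfrac{2}{3}\delta - \tfrac{1}{3}\delta_i \right\rfloor,$$
where I have used $\sum_{j \neq i}\delta_j = \delta - \delta_i$ (recall that with $n=3$ every non-consensus decision is a $2$-vs-$1$ split, so $\delta = \delta_1 + \delta_2 + \delta_3$ is exactly the number of non-consensus decisions). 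The key observation is that this bound is weakly decreasing in $\delta_i$, so the tightest bound comes from the agent with the largest $\delta_i$.

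Next I would pick $i^*$ with $\delta_{i^*} = \max_i \delta_i$. Since $\delta_1,\delta_2,\delta_3$ are nonnegative integers summing to $\delta$, their maximum satisfies $\delta_{i^*} \geq \delta/3$. Substituting this into the per-agent bound and invoking monotonicity of the floor function gives
$$\min_i MMS_i^{adapt} \leq MMS_{i^*}^{adapt} \leq m - \delta + \left\lfloor \tfrac{2}{3}\delta - \tfrac{1}{3}\cdot \tfrac{\delta}{3}\right\rfloor = m - \delta + \left\lfloor \tfrac{5}{9}\delta\right\rfloor.$$
Finally I would convert to the ceiling form using the identity $n - \lfloor x\rfloor = \lceil n - x\rceil$ valid for integer $n$ (here $n=\delta$): this yields $m - \delta + \lfloor \tfrac{5}{9}\delta\rfloor = m - \bigl(\delta - \lfloor \tfrac{5}{9}\delta\rfloor\bigr) = m - \lceil \tfrac{4}{9}\delta\rceil$, which is exactly the claimed bound.

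There is no genuine obstacle here: the result is a direct averaging consequence of the bound already established in Corollary~\ref{corr:mms_adapt_n=3_bound}. The only point requiring a little care is the interaction of the integer rounding with the inequality $\delta_{i^*}\geq\delta/3$, but this is benign, since monotonicity of the floor handles the substitution and the $\lfloor\cdot\rfloor \to \lceil\cdot\rceil$ rewrite is a standard identity for integer $\delta$. I would also note in passing that the improvement over the weaker bound $m - \lceil \delta/3\rceil$ of Corollary~\ref{corr:mms_adapt_n=3_bound} comes precisely from the pigeonhole fact that some agent must be the minority on at least a $1/3$ fraction of the contested decisions, which is what lets us replace the factor $\tfrac13$ by $\tfrac49$.
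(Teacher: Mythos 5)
Your proof is correct, but it takes a genuinely different route from the paper's. The paper sums the three per-agent bounds of Corollary~\ref{corr:mms_adapt_n=3_bound} (each $\delta_i$ picks up total coefficient $\frac{1}{3}+\frac{2}{3}+\frac{2}{3}=\frac{5}{3}$ across the three agents' bounds), obtains $\sum_{i=1}^3 MMS_i^{adapt} \leq 3m - \lceil \frac{4}{3}\delta \rceil$, and then invokes $\min_i MMS_i^{adapt} \leq \frac{1}{3}\sum_{i=1}^3 MMS_i^{adapt}$. You instead apply the per-agent bound only to the agent $i^*$ maximizing $\delta_{i^*}$ and use the pigeonhole inequality $\delta_{i^*} \geq \delta/3$; the averaging is thus done over the $\delta_i$'s rather than over the $MMS$ values, but it exploits the same symmetric structure and yields the identical constant $\frac{4}{9}$. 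Your route buys one concrete advantage: it stays exact on the rounding, since you substitute inside the floor and then use the identity $\delta - \lfloor \frac{5}{9}\delta \rfloor = \lceil \frac{4}{9}\delta \rceil$ for integer $\delta$. The paper's final step, by contrast, asserts $m - \frac{1}{3}\lceil \frac{4}{3}\delta\rceil \leq m - \lceil\frac{4}{9}\delta\rceil$, which is false as a statement about reals for some values of $\delta$ (e.g.\ $\delta = 1$ gives $m - \frac{2}{3} \not\leq m - 1$) and implicitly needs the integrality of $\min_i MMS_i^{adapt}$ to be repaired; your argument avoids that subtlety entirely. A side benefit of your formulation is that it actually proves the slightly stronger per-instance bound $\min_i MMS_i^{adapt} \leq m - \delta + \lfloor \frac{2}{3}\delta - \frac{1}{3}\max_i \delta_i \rfloor$, which improves on $m - \lceil \frac{4}{9}\delta \rceil$ whenever the $\delta_i$'s are unbalanced.
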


\begin{proof}

\[
\begin{split}
& \sum_{i=1}^3 MMS_i^{adapt} \\
& \leq \lfloor \frac{2}{3}\delta_2 + \frac{2}{3} \delta_3 + \frac{1}{3} \delta_1 + m - \delta \rfloor \\
& + \lfloor \frac{2}{3}\delta_2 + \frac{2}{3} \delta_1 + \frac{1}{3} \delta_3 + m - \delta \rfloor \\
& + \lfloor \frac{2}{3}\delta_1 + \frac{2}{3} \delta_3 + \frac{1}{3} \delta_2 + m - \delta \rfloor \\
& \leq 3m - 3\delta + \lfloor \frac{5}{3} \delta \rfloor \\
& = 3m - \lceil \frac{4}{3} \delta \rceil
\end{split}
\]

We then have
\[
\begin{split}
& \min_i MMS_i^{adapt} \leq \frac{1}{3} \sum_{i=1}^3 MMS_i^{adapt} \leq \\
& m - \frac{1}{3}\lceil \frac{4}{3} \delta \rceil \leq m - \lceil \frac{4}{9} \delta \rceil. 
\end{split}
\]

\end{proof}

We now prove Lemma~\ref{thm:muffledMajMMSAdapt}:

\begin{proof}
(Upper bound) Consider $\forall j, M_1[j] = 0, M_2[j] = M_3[j] = 1$. I.e., agent $1$ opposes agents $2$ and $3$ in all $m$ decisions. A quick corollary of Theorem~\ref{thm:muffledMaj} is then (assuming an even $m$) that all agents have $v_i(A) = \frac{m}{2}$. However, direct calculation shows that  $MMS_2^{adapt} = MMS_3^{adapt} = \frac{2}{3}$, and so the algorithm guarantees $\frac{\frac{1}{2}}{\frac{2}{4}} = \frac{3}{4}-MMS^{adapt}$. 

(Lower bound) We take the perspective of one of the agents, w.l.o.g. agent $1$. We know by Corollary~\ref{corr:mms_adapt_n=3_bound} that $MMS_1^{adapt} \leq m - \lceil \frac{\delta}{3} \rceil. $. We know by Theorem~\ref{thm:muffledMaj} that if we arrange all the decisions of type $\delta$ (non-unanimous decisions) first, and run the algorithm with $m = \delta$, each agent is guaranteed at least $\lfloor \frac{\delta}{2} \rfloor$. Moreover, we can make sure that the last out of $\delta$ non-unanimous decisions is such that the agent that has minimal $MMS_i^{adapt}$ is in minority. 
Then, we decide all unanimous decisions in the obvious way, and overall achieve $v_1(A) \geq \lfloor \frac{\delta}{2} \rfloor + m - \delta = m - \lceil \frac{\delta}{2} \rceil$ for the minimal $MMS_i^{adapt}$ agent, and $m - \lfloor \frac{\delta}{2} \rfloor$ for the other two agents. 

Then, if agent $1$ is a higher $MMS_i^{adapt}$ agent, we have:

\[
\begin{split}
& \frac{u_1(A)}{MMS_1^{adapt}} \geq \frac{m - \lfloor \frac{\delta}{2} \rfloor}{m - \lceil \frac{\delta}{3} \rceil} \geq \frac{m - \frac{\delta}{2}}{m - \frac{\delta}{3}} \geq \\
& \frac{3}{2} \cdot \frac{2m - \delta}{3m - \delta} \stackrel{\delta \leq m}{\geq} \frac{3}{2}\frac{m}{2m} = \frac{3}{4}. 
\end{split}
\]

If agent $1$ is a lower $MMS_i^{adapt}$ agents, and $delta \geq 9$, we have:

\[
\begin{split}
& \frac{u_1(A)}{MMS_1^{adapt}} \geq \frac{m - \lceil \frac{\delta}{2} \rceil}{m - \lceil \frac{4\delta}{9} \rceil} \geq \frac{m - \lceil \frac{\delta}{2} \rceil}{m - \lceil \frac{\delta}{3} - \frac{\delta}{9} \rceil} \geq \\
& \frac{m - \lceil \frac{\delta}{2} \rceil}{m - \lceil \frac{\delta}{3} \rceil - 2} \geq \frac{m - \frac{\delta}{2} - 1}{m - \frac{\delta}{3} - 2} \geq \frac{m - \frac{\delta}{2}}{m - \frac{\delta}{3}} \geq \frac{3}{4}, 
\end{split}
\]
where the last transition is done similarly to the previous case.

But for $\delta < 9$, it holds that $\lceil \frac{\delta}{2} \rceil = \lceil \frac{4\delta}{9} \rceil$, and so the ratio is always $1$ in this case. 
\end{proof}

We know from Example~\ref{ex:egal_adapt_discrep} that $MMS^{egal}$ and $MMS^{adapt}$ are incompatible, in the sense that it is impossible to always find an outcome that guarantees both. We thus now limit our attention to search for an algorithm that guarantees $MMS^{adapt}$. 

\begin{restatable}{lemma}{MNWmmsEgal}
\label{lem:mnw_mms_egal}
    MNW guarantees at least $(\frac{2}{n} - \frac{2}{m})-MMS^{egal}$, and there is an instance where MNW guarantees at most $\frac{2}{n}-MMS^{egal}$. 
\end{restatable}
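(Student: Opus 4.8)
The plan is to treat the two directions separately, using the fact that by definition $MMS^{egal}_i = \lfloor m/2 \rfloor$ for \emph{every} agent $i$ (independent of preferences); hence both bounds reduce to estimating the MNW utility $u_i(A^{MNW})$ and dividing by $\lfloor m/2 \rfloor$.

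For the lower bound I would invoke the fact, established for public decision-making in \cite{fairPublicDM}, that the MNW outcome always satisfies PROP1. In our binary, unit-weight setting the proportional share of agent $i$ is $\mathrm{Prop}_i = \frac{m}{n}$ (the utility $m$ an agent would obtain if every decision went its way, divided by $n$), and a single decision contributes at most $1$ to any agent, so PROP1 reads $u_i(A^{MNW}) \geq \frac{m}{n} - 1$. Dividing by $MMS^{egal}_i$ then gives, for even $m$, $\frac{u_i(A^{MNW})}{MMS^{egal}_i} \geq \frac{m/n - 1}{m/2} = \frac{2}{n} - \frac{2}{m}$, which is exactly the claimed coefficient. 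For general $m$ the floors only help: using the integer form $u_i(A^{MNW}) \geq \lfloor m/n \rfloor \geq \frac{m-n+1}{n}$ together with $MMS^{egal}_i = \lfloor m/2 \rfloor \leq \frac{m}{2}$ yields a ratio at least $\frac{2}{n} - \frac{2(n-1)}{nm} \geq \frac{2}{n} - \frac{2}{m}$, and I would include this short rounding check.

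For the upper bound I would exhibit the ``all-against-one'' instance: choose $m$ both divisible by $n$ and even, set $M_1 = (0,\ldots,0)$ and $M_2 = \cdots = M_n = (1,\ldots,1)$. Every outcome is then determined, as far as utilities go, by the number $k$ of decisions that go agent $1$'s way, giving the profile $u_1 = k$ and $u_i = m-k$ for $i \geq 2$, so the Nash welfare equals $k\,(m-k)^{n-1}$. Since its logarithm $\log k + (n-1)\log(m-k)$ is strictly concave on $(0,m)$, the unique maximizer is the continuous optimum $k = m/n$, which is an integer by the divisibility choice; hence every MNW outcome has $u_1(A^{MNW}) = m/n$. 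As $MMS^{egal}_1 = m/2$, the ratio for agent $1$ is exactly $\frac{m/n}{m/2} = \frac{2}{n}$ (while the remaining agents have ratio $\frac{2(n-1)}{n} \geq 1$), so on this instance MNW guarantees at most $\frac{2}{n}$-$MMS^{egal}$. I would also check that \emph{all} MNW outcomes share this utility profile, so that the bound is unambiguous.

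The main obstacle is the lower bound, and specifically the PROP1 guarantee: unlike the private-goods case, flipping a single decision here changes every agent's utility simultaneously, so the naive ``single-transfer'' swap argument does not by itself establish PROP1, and one must rely on the competitive-equilibrium / first-order optimality argument of \cite{fairPublicDM}, which I would cite rather than reprove. Everything else --- the proportional-share computation, the rounding check, and the concavity-based optimization of $k(m-k)^{n-1}$ --- is elementary.
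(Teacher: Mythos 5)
Your proposal is correct and follows essentially the same route as the paper: the lower bound via the $\lfloor m/n \rfloor$ (PROP1) guarantee for MNW from \cite{fairPublicDM} divided by $MMS^{egal}_i = \lfloor m/2 \rfloor$, and the upper bound via the all-against-one instance with Nash welfare $k(m-k)^{n-1}$ maximized at $k = m/n$. Your additions (the concavity/uniqueness check and the explicit rounding argument for general $m$) only tighten steps the paper treats more casually.
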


\begin{proof}
    (Lower bound)

    The lower bound is a direct corollary of Theorem 4 of \cite{fairPublicDM} that guarantees $\lfloor \frac{m}{n} \rfloor$. Then, $$\frac{ \lfloor \frac{m}{n}\rfloor}{\lfloor \frac{m}{2} \rfloor} \geq \frac{\frac{m}{n}-1}{\frac{m}{2}} = \frac{2(m - n)}{n\cdot m} = \frac{2}{n} - \frac{2}{m}. $$

    (Upper bound)

Consider the case in which for all $i \in \{1, n-1\}$ and for all $j \in m$, $M_i[j] = 1$ and $M_n[j] = 0$. Assume also $m \mod n = 0, m \mod 2 = 0$. 

Let $\theta$ be the amount of decisions $j$ that in the MNW have $A^{MNW}[j] = 1$ (i.e., the decision goes against agent $n$). And the rest of the $m-\theta$ decisions have $A^{MNW}[j] = 1$. 

The utility of anyone from the first $n-1$ players is $\theta$.
The utility of the last player n is $m-\theta$.

the NW of this outcome is $\theta^{n-1}(m-\theta) = m\theta^{n-1}-\theta^n$. 

We solve the following equation for the derivative in order to find the value of $\theta$ that results in $MNW$:

\begin{equation}
\label{eq:NW_deriv}
\begin{split}
& \frac{\partial (m\theta^{n-1}-\theta^n)}{\partial \theta}= m(n-1)\theta^{n-2}-n\theta^{n-1}=\\
& \frac{\theta^{n-2}}{n}(m-\frac{m}{n}-\theta)=0
\end{split}
\end{equation}

which yields $\theta = m - \frac{m}{n}$. 

The $MMS^{egal}$ is $\frac{m}{2}$, and the utility of the worst-off player is $\frac{m}{n}$, so any outcome must be at most $\frac{m}{n} = \frac{2}{n} \cdot \frac{m}{2} =\frac{2}{n}-MMS^{egal}$.
    
\end{proof}

\end{document}